\newtheorem{theo}{Theorem}
\newtheorem{prop}{Proposition}
\newtheorem{defn}{Definition}
\newtheorem{lem}{Lemma}
\DeclareMathOperator{\tr}{tr}
\DeclareMathOperator{\vecc}{vec}
\DeclareMathOperator{\diag}{diag}
\newcommand{\rmT}{{\rm T}}
\newcommand{\BBR}{{\mathbb R}}
\newcommand{\SO}{{\mathcal O}}
\newcommand{\blue}[1]{ #1}
\newcommand{\red}[1]{\iffalse #1 \fi}
\DeclareMathOperator*{\argmin}{arg\,min}
\newenvironment{lema}[1]{
  
  \lemalt
}{\endlemalt}
\newcounter{example}
\newenvironment{example}[1][]{\refstepcounter{example}\par\medskip
   \noindent \textbf{\indent Example~\theexample. #1} \rmfamily}{\medskip}
\begin{document}

\def\BibTeX{{\rm B\kern-.05em{\sc i\kern-.025em b}\kern-.08em
    T\kern-.1667em\lower.7ex\hbox{E}\kern-.125emX}}
\markboth{\journalname, VOL. XX, NO. XX, XXXX 2017}
{Author \MakeLowercase{\textit{et al.}}: Preparation of Papers for IEEE Control Systems Letters (August 2022)}

\title{Efficient Batch and Recursive Least Squares for Matrix Parameter Estimation} % with Application to Adaptive MPC}

\author{Brian Lai and Dennis S. Bernstein% <-this % stops a space
\thanks{
%Manuscript Received 15 September 2022; revised 15 November 2022; accepted 12 December 2022. Date of publication XX MONTH XXXX; date of current version XX MONTH XXXX. 
This work was supported by the NSF Graduate Research Fellowship under Grant No. DGE 1841052. 
%Recommended by Editor XXXXXXXXXXXXXXXX. 
\textit{(Corresponding author: Brian Lai.)} \\
The authors are with the Department of Aerospace Engineering, University of Michigan, Ann Arbor, MI 48109 USA (e-mail: brianlai@umich.edu; dsbaero@umich.edu).
%\\
%Digital Object Identifier XXXXXXXXXXXX
}
}

\maketitle
\thispagestyle{empty}
%
%%%%%%%%%%%%%%%%%%%%%%%%%%%%%%%%%%%%%%%%%%%%%%%%%%%%%%%%%%%%%%%%%%%%%%%%%%%%%%%%
\begin{abstract}
    Traditionally, batch least squares (BLS) and recursive least squares (RLS) are used for identification of a vector of parameters which form a linear model. 
    In some situations however, it is of interest to identify parameters in a matrix structure. 
    In this case, a common approach is to transform the problem into standard vector form using the vectorization (vec) operator and the Kronecker product, known as vec-permutation.
    However, the use of the Kronecker product introduces extraneous zero terms in the regressor, resulting in unnecessary additional computational and space requirements. 
    This work derives matrix BLS and RLS formulations which, under mild assumptions, minimize the same cost as the vec-permutation approach.
    This new approach requires less computational complexity and space complexity than vec-permutation in both BLS and RLS identification.
    It is also shown that persistent excitation guarantees convergence to the true matrix parameters. 
    This method can used to improve computation time in the online identification of multiple-input, multiple-output systems for indirect adaptive model predictive control.
\end{abstract}

\begin{IEEEkeywords}
Identification, Modeling, Adaptive Systems, MIMO Systems
% 3-5 keywords
\end{IEEEkeywords}

\section{Introduction}
Least squares based identification methods are foundational to systems and control theory, particularly identification, signal processing, and adaptive control \cite{aastrom1995adaptive,ljung1983theory}.
Batch least squares (BLS) and recursive least squares (RLS) are traditionally used to identify a vector of parameters in a linear measurement process \cite{ljung1983theory,islam2019recursive}.
However, it may be of interest to identify parameters in a matrix structure, for example, in adaptive control of multiple-input, multiple-output (MIMO) systems \cite{nguyen2021predictive,islam2019recursive}.
One approach is to use \textit{vec-permutation} \cite{henderson1981vec}, a method which rewrites the linear measurement process such that the columns of the parameters to be identified are stacked into a vector. 
This is accomplished using the the vectorization operator and Kronecker product, and is a straightforward solution for various situations
\cite{islam2021data,nguyen2021predictive,zhu2021recursive,ding2013coupled,wang2018recursive,mohseni2022predictive,farahmandi2024predictive,ma2019recursive}.

A significant drawback, however, is that the vec-permutation method increases the dimension of the linear measurement process by using the Kronecker product, introducing extraneous zero terms in the regressor (e.g. equation (15) of \cite{nguyen2021predictive}).
This results in increased computational cost and storage requirements.
\blue{Another approach is to apply standard least squares methods to separately identify the columns of the matrix of parameters \cite[p. 102]{ljung1983theory} but this method does not address what cost function is being minimized or the relationship to the vec-permutation approach.}
Other related methods including square root filtering \cite{peterka1975square}, multiinnovations \cite{ding2009multiinnovation}, and gradient-based methods \cite{bamieh2002identification} also do not address whether a least squares cost function is globally minimized or the relationship to standard least squares methods.

This work derives a batch and recursive least squares algorithm for identification of matrix parameters which, under the assumption of independent residual error and parameter column weighting, minimizes the same cost function used in the vec-permutation approach. 
This method provides an $\mathcal{O}(m^3)$ times improvement in computational complexity and an $\mathcal{O}(m^2)$ times improvement in storage requirements over vec-permutation, where $m \ge 1$ is the number of columns of the identified parameter matrix.
We also show how persistent excitation guarantees convergence of the identified matrix parameters to true matrix parameters, which extends established results for identification of vector parameters \cite{bruce2021necessary,lai2021regularization}.
\blue{However, we show this improvement in computational complexity may come at the cost of performance if the columns of measurement noise are highly correlated.}
Finally, we show how this method can be used to significantly reduce computation time spent on online identification in predictive cost adaptive control (PCAC) \cite{nguyen2021predictive}.

% \subsubsection{Computational Complexity of Matrix Operations}
%
% \label{subsec: computational cost}
%
% In online identification algorithms, such as RLS, we are often concerned about the total number of arithmetic operations per update, in addition to the asymptotic computational complexity. This is because, often, the dimension of data being manipulated is relatively small, but the number of updates performed is relatively large. 
%
%Although algorithms exist with better asymptotic performance (e.g. \cite{strassen1969gaussian}), 
% We assume it takes $\mathcal{O}(nmp)$ arithmetic operations to multiply an $(n \times m)$ matrix with a $(m \times p)$ matrix, $\mathcal{O}(n^3)$ arithmetic operations to invert an $(n \times n)$ matrix, and $\mathcal{O}(nm)$ arithmetic operations to add two $(n \times m)$ matrices. %or multiply an $(n \times m)$ matrix by a scalar.

\section{Vec-permutation Least Squares}

Consider a measurement process of the form\footnote{Note that since the measurement, regressor, and parameters are all matrices, the results of this work can be easily extended to measurements processes of the form $y_k = \theta \phi_k$ by rewriting as $y_k^\rmT = \phi_k^\rmT \theta^\rmT$ and identifying parameters $\theta^\rmT$. For brevity, we leave the details to the reader.}
\begin{align}
    y_k = \phi_k \theta, \label{eqn: yk = phik theta}
\end{align}
where $k = 0,1,2,\hdots$ is the time step, $y_k \in \BBR^{p \times m}$ is the measurement at step $k$, $\phi_k \in \BBR^{p \times n}$ is the regressor at step $k$, and $\theta \in \BBR^{n \times m}$ is a matrix of unknown parameters.
Parameters $\theta$ can be identified by minimizing the least squares cost function $J_k \colon \BBR^{n \times m} \rightarrow \BBR$, defined as
\begin{align}
\label{eqn: RLS cost, no assumptions}
    J_k(\hat{\theta}) = & \sum_{i=0}^k \vecc(y_i - \phi_i \hat{\theta})^\rmT \bar{\Gamma}_i \vecc(y_i - \phi_i \hat{\theta}) \nonumber \\
    & + \vecc(\hat{\theta} - \theta_0)^\rmT \bar{R} \vecc(\hat{\theta} - \theta_0),
\end{align}
where $\vecc(\cdot)$ is the column stacking operator, positive-definite \blue{(and thus, by definition,
symmetric)} $\bar{R} \in \BBR^{mn \times mn}$ is the regularization matrix, $\theta_0 \in \BBR^{n \times m}$ is an initial estimate of $\theta$, and, for all $k \ge 0$, positive-definite $\bar{\Gamma}_k \in \BBR^{mp \times mp}$ is the weighting matrix.
\blue{If no estimate of the parameters $\theta$ is known, common choices in practice for $\theta_0$ are $\theta_0 = 0$ or randomly sampled values.}
\blue{Note that while the regularization term $\vecc(\hat{\theta} - \theta_0)^\rmT \bar{R} \vecc(\hat{\theta} - \theta_0)$ results in a biased estimate, this term reduces variance and guarantees that the least squares cost \eqref{eqn: RLS cost, no assumptions} has a unique global minimizer which is useful when few data have been collected \cite{lai2021regularization}\cite[sec. 7.3]{hastie2009elements}.}

Using vec-permutation \cite{henderson1981vec}, \eqref{eqn: yk = phik theta} can be rewritten as
\begin{align}
    \bar{y}_k = \bar{\phi}_k \bar{\theta},
\end{align}
where $\bar{y}_k \in \BBR^{mp}$, $\bar{\phi}_k \in \BBR^{mp \times mn}$, and $\bar{\theta} \in \BBR^{mn}$ are defined
\begin{align}
    \bar{y}_k &\triangleq \vecc(y_k), \label{eqn: ybark defn} \\
    \bar{\phi}_k &\triangleq I_m \otimes \phi_k, \label{eqn: phibark defn} \\
    \bar{\theta} &\triangleq \vecc(\theta),
\end{align}
and where $\otimes$ is the Kronecker product.
Note, for all $k \ge 0$ and $\hat{\theta} \in \BBR^{n \times m}$ that $\vecc(y_k - \phi_k \hat{\theta}) = \bar{y}_k - \bar{\phi}_k \vecc(\hat{\theta})$ and $\vecc(\hat{\theta} - \theta_0) = \vecc(\hat{\theta}) - \bar{\theta}_0$,
where $\bar{\theta}_0\in \BBR^{mn}$ is defined as
\begin{align}
    \bar{\theta}_0 \triangleq \vecc(\theta_0).
\end{align}
It then follows that, for all $k \ge 0$, the cost function $J_k$, given in \eqref{eqn: RLS cost, no assumptions}, can be rewritten as
\begin{align}
\label{eqn: RLS cost, vec permutation}
    J_k(\hat{\theta}) = & \sum_{i=0}^k (\bar{y}_i - \bar{\phi}_i \vecc(\hat{\theta}))^\rmT \bar{\Gamma}_i (\bar{y}_i - \bar{\phi}_i \vecc(\hat{\theta})) \nonumber \\
    & + (\vecc(\hat{\theta}) - \bar{\theta}_0)^\rmT \bar{R} (\vecc(\hat{\theta}) - \bar{\theta}_0).
\end{align}
Propositions \ref{prop: batch vec permutation} gives the vec-permutation approach to minimize cost function $J_k$.

% \subsection{Batch Identification}

\blue{
\begin{prop}
\label{prop: batch vec permutation}
    For all $k \ge 0$, let $\phi_k \in \BBR^{p \times n}$, let $y_k \in \BBR^{p \times m}$, and let $\bar{\Gamma}_k \in \BBR^{mp \times mp}$ be positive definite. 
    Furthermore, let $\bar{\theta}_0 \in \BBR^{n \times m}$  and let $\bar{R} \in \BBR^{mn \times mn}$ be positive definite. 
    Then, for all $k \ge 0$, $J_k \colon \BBR^n \rightarrow \BBR$, defined in \eqref{eqn: RLS cost, no assumptions}, has a unique minimizer, whose vectorization is denoted as $\bar{\theta}_{k+1} \triangleq \vecc ( \argmin_{\hat{\theta} \in \BBR^{n \times m}} J_k(\hat{\theta}) )$, which is given by
    \begin{align}
    \label{eqn: batch vec permutation}
        \bar{\theta}_{k+1} = \bar{A}_k^{-1} \bar{b}_k,
    \end{align}
    where
    \begin{align}
        \bar{A}_k &\triangleq \bar{R} + \sum_{i=0}^k \bar{\phi}_i^\rmT \bar{\Gamma}_i \bar{\phi}_i, 
        \label{eqn: batch vec permutation A}
        \\
        \bar{b}_k &\triangleq \bar{R} \vecc(\theta_0) + \sum_{i=0}^k \bar{\phi}_i^\rmT \bar{\Gamma}_i \bar{y}_i,
        \label{eqn: batch vec permutation b}
    \end{align}
    and where, for all $k \ge 0$, $\bar{y}_k \in \BBR^{mp}$ and $\bar{\phi}_k \in \BBR^{mp \times mn}$ are defined in \eqref{eqn: ybark defn} and \eqref{eqn: phibark defn}, respectively.
    Furthermore, for all $k \ge 0$, $\bar{\theta}_{k+1}$ is given recursively by
    \begin{align}
        \bar{P}_{k+1}^{-1} &= \bar{P}_k^{-1} + \bar{\phi}_k^\rmT \bar{\Gamma}_k \bar{\phi}_k, 
        \label{eqn: recursive vec permutation Pkinv update}
        \\
        \bar{\theta}_{k+1} &= \bar{\theta}_k + P_{k+1} \bar{\phi}_k^\rmT \bar{\Gamma}_k (\bar{y}_k - \bar{\phi}_k \bar{\theta}_k).
        \label{eqn: recursive vec permutation thetak update}
    \end{align}
    where $\bar{P}_0 \triangleq \bar{R}^{-1}$, and, for all $k \ge 0$, $\bar{P}_k \in \BBR^{mn \times mn}$ is positive definite, and hence nonsingular.
    Moreover, for all $k \ge 0$, $\bar{P}_k$ can be expressed recursively as
    \begin{align}
        \bar{P}_{k+1} &= \bar{P}_k - \bar{P}_k \bar{\phi}_k^\rmT(\bar{\Gamma}_k^{-1} + \bar{\phi}_k \bar{P}_k \bar{\phi}_k^\rmT)^{-1} \bar{\phi}_k \bar{P}_k.
        \label{eqn: recursive vec permutation Pk update}
    \end{align}
\end{prop}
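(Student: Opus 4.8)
The plan is to exploit the fact that, after the substitution $x \isdef \vecc(\hat{\theta}) \in \BBR^{mn}$, the cost \eqref{eqn: RLS cost, vec permutation} becomes a single strictly convex quadratic in the vector variable $x$, so the entire proposition reduces to the standard regularized weighted least squares and recursive least squares theory. I would first establish the batch formula \eqref{eqn: batch vec permutation}, then bootstrap the recursions \eqref{eqn: recursive vec permutation Pkinv update}--\eqref{eqn: recursive vec permutation thetak update} from it, and finally obtain the Riccati form \eqref{eqn: recursive vec permutation Pk update} via the matrix inversion lemma.

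For the batch solution, I would expand $J_k$ as a quadratic in $x$ and, using that each $\bar{\Gamma}_i$ and $\bar{R}$ is symmetric, compute the gradient
\begin{align}
\nabla J_k(x) = 2\Big(\bar{R} + \sum_{i=0}^k \bar{\phi}_i^\rmT \bar{\Gamma}_i \bar{\phi}_i\Big)x - 2\Big(\bar{R}\vecc(\theta_0) + \sum_{i=0}^k \bar{\phi}_i^\rmT \bar{\Gamma}_i \bar{y}_i\Big),
\end{align}
which equals $2(\bar{A}_k x - \bar{b}_k)$. The Hessian is $2\bar{A}_k$; since $\bar{R} \succ 0$ and each $\bar{\phi}_i^\rmT \bar{\Gamma}_i \bar{\phi}_i$ is positive semidefinite (as $\bar{\Gamma}_i \succ 0$), $\bar{A}_k$ is positive definite, hence nonsingular, so the stationarity condition $\bar{A}_k x = \bar{b}_k$ admits the unique solution $x = \bar{A}_k^{-1}\bar{b}_k$. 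This establishes \eqref{eqn: batch vec permutation} and the uniqueness of the minimizer.

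For the recursions, I would set $\bar{P}_{k+1} \isdef \bar{A}_k^{-1}$, which is consistent with $\bar{P}_0 = \bar{R}^{-1}$ under the empty-sum convention $\bar{A}_{-1} = \bar{R}$. From \eqref{eqn: batch vec permutation A}, $\bar{A}_k = \bar{A}_{k-1} + \bar{\phi}_k^\rmT \bar{\Gamma}_k \bar{\phi}_k$, which on taking inverses is exactly \eqref{eqn: recursive vec permutation Pkinv update}. Positive definiteness of $\bar{P}_k$ follows by induction: $\bar{P}_0 \succ 0$, and if $\bar{P}_k \succ 0$ then $\bar{P}_{k+1}^{-1} = \bar{P}_k^{-1} + \bar{\phi}_k^\rmT \bar{\Gamma}_k \bar{\phi}_k \succ 0$. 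For the parameter update I would write $\bar{b}_k = \bar{b}_{k-1} + \bar{\phi}_k^\rmT \bar{\Gamma}_k \bar{y}_k$, use the batch identity at the previous step in the form $\bar{b}_{k-1} = \bar{P}_k^{-1}\bar{\theta}_k$, and substitute $\bar{P}_k^{-1} = \bar{P}_{k+1}^{-1} - \bar{\phi}_k^\rmT \bar{\Gamma}_k \bar{\phi}_k$ into $\bar{\theta}_{k+1} = \bar{P}_{k+1}\bar{b}_k$; after cancellation the $-\bar{\phi}_k^\rmT \bar{\Gamma}_k \bar{\phi}_k \bar{\theta}_k$ term combines with $\bar{\phi}_k^\rmT \bar{\Gamma}_k \bar{y}_k$ to yield the innovation form \eqref{eqn: recursive vec permutation thetak update}. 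Finally, applying the matrix inversion lemma to $\bar{P}_{k+1} = (\bar{P}_k^{-1} + \bar{\phi}_k^\rmT \bar{\Gamma}_k \bar{\phi}_k)^{-1}$ with the identifications $A = \bar{P}_k^{-1}$, $U = \bar{\phi}_k^\rmT$, $C = \bar{\Gamma}_k$, $V = \bar{\phi}_k$ produces \eqref{eqn: recursive vec permutation Pk update}.

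I expect the only delicate step to be the algebra of the parameter-update recursion, specifically the substitution $\bar{b}_{k-1} = \bar{P}_k^{-1}\bar{\theta}_k$ and the telescoping cancellation that converts the raw batch solution into the corrector (innovation) form; everything else is a direct gradient computation, a short induction, or a textbook application of the matrix inversion lemma.
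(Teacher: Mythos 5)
Your proposal is correct, and it rests on the same key observation as the paper: after the substitution $x = \vecc(\hat{\theta})$, the cost \eqref{eqn: RLS cost, vec permutation} is a standard regularized, weighted least squares cost in a vector parameter. The difference is in what happens next: the paper's entire proof is a one-line appeal to this observation followed by a citation of the standard vector RLS results in \cite{islam2019recursive}, whereas you derive those results from scratch --- the gradient/Hessian argument for the unique batch minimizer $\bar{A}_k^{-1}\bar{b}_k$, the inductive positive-definiteness of $\bar{P}_k$, the telescoping substitution $\bar{b}_{k-1} = \bar{P}_k^{-1}\bar{\theta}_k$ together with $\bar{P}_k^{-1} = \bar{P}_{k+1}^{-1} - \bar{\phi}_k^\rmT\bar{\Gamma}_k\bar{\phi}_k$ that produces the innovation form \eqref{eqn: recursive vec permutation thetak update}, and the Woodbury identity for \eqref{eqn: recursive vec permutation Pk update}. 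All of these steps check out (note that your gradient computation implicitly uses the symmetry of $\bar{R}$ and $\bar{\Gamma}_i$, which the paper's definition of positive definiteness supplies, and that uniqueness transfers between $\hat{\theta}$ and $x$ because $\vecc$ is a linear bijection). What your route buys is a self-contained proof that verifies the one genuinely non-mechanical piece of algebra --- the conversion of the batch solution into corrector form --- while the paper's route buys brevity by outsourcing exactly that derivation to the reference; mathematically the two are the same reduction.
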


\begin{proof}
    Since $J_k \colon \BBR^n \rightarrow \BBR$, defined in \eqref{eqn: RLS cost, no assumptions}, is a standard least squares cost with vector parameters, this result follows directly from \cite{islam2019recursive}.
\end{proof}
}
\blue{Equations \eqref{eqn: batch vec permutation} through \eqref{eqn: recursive vec permutation thetak update} give the batch least squares solution using vec-permutation while \eqref{eqn: recursive vec permutation Pkinv update} through \eqref{eqn: recursive vec permutation Pk update} give the recursive least squares solution using vec-permutation.}
An inefficiency with this method is that the Kronecker product in \eqref{eqn: phibark defn} introduces extraneous zero terms in $\bar{\phi}_k$ when $m > 1$, resulting in a sparse and higher dimensional regressor matrix.
However, the results of Proposition \ref{prop: batch vec permutation} cannot be simplified since the regularization matrix $\bar{R}$ and weighting matrices $\bar{\Gamma}_k$ are not necessarily sparse.
\red{The computational complexities of BLS and RLS with vec-permutation are shown in Tables \ref{table: batch} and \ref{table: recursive} respectively.}

\section{Column-by-Column Least Squares}

\blue{To simplify the vec-permutation approach}, we make the assumption that there exist positive-definite $R_1,\hdots,R_m \in \BBR^{n \times n}$ such that $\bar{R}$ is block diagonal of the form
\begin{align}
    \bar{R} = 
    % \begin{bmatrix}
    %     R_1 & \cdots & 0_{n \times n} \\
    %     \vdots & \ddots & \vdots \\
    %     0_{n \times n} & \cdots & R_m
    % \end{bmatrix}.
    \diag(R_1,\hdots,R_m).
    \label{eqn: Rbar block diag}
\end{align}
Furthermore, we assume that, for all $k \ge 0$, there exist positive-definite $\Gamma_{1,k},\hdots,\Gamma_{m,k} \in \BBR^{p \times p}$ such that $\bar{\Gamma}_k$ is block diagonal of the form
\begin{align}
    \bar{\Gamma}_k = 
    % \begin{bmatrix}
    %     \Gamma_{1,k} & \cdots & 0_{p \times p} \\
    %     \vdots & \ddots & \vdots \\
    %     0_{p \times p} & \cdots & \Gamma_{m,k}
    % \end{bmatrix}.
    \diag(\Gamma_{1,k},\hdots,\Gamma_{m,k}).
    \label{eqn: Gammabar block diag}
\end{align}
This corresponds to independent weighting of the columns of the residual error, $y_i - \phi_i \hat{\theta}$, $i = 0,\hdots,k$, and of the regularization term, $\hat{\theta} - \theta_0$, in \eqref{eqn: RLS cost, no assumptions}.
Then, for all $k \ge 0$ and $\hat{\theta} \in \BBR^{n \times m}$, \eqref{eqn: RLS cost, no assumptions} can be rewritten as
\begin{align}
    \label{eqn: Jk independent column weighting}
    J_k(\hat{\theta}) = \sum_{j=0}^m J_{j,k} (\hat{\theta}_j),
\end{align}
where, for all $j = 1,\hdots,m$, $J_{j,k} \colon \BBR^n \rightarrow \BBR$ is defined as
\begin{align}
        J_{j,k}(\hat{\theta}_j) = & \sum_{i=0}^k (y_{j,i}-\phi_i \hat{\theta}_j)^\rmT \Gamma_{j,i} (y_{j,i}-\phi_i \hat{\theta}_j) \nonumber \\
    & + (\hat{\theta}_j - \theta_{j,0})^\rmT R_j (\hat{\theta}_j - \theta_{j,0}).
\end{align}
where the vectors $y_{1,k},\hdots,y_{m,k} \in \BBR^{p}$, $\theta_{1,0},\hdots,\theta_{m,0} \in \BBR^n$, and $\hat{\theta}_1,\hdots,\hat{\theta}_m \in \BBR^n$ are the $m$ columns of $y_k$, $\theta_0$, and $\hat{\theta}$, respectively. In particular,
\begin{align}
    y_k &\triangleq \begin{bmatrix} y_{1,k} & \cdots & y_{m,k} \end{bmatrix}, \label{eqn: y_j,k defn} \\
    \theta_0 &\triangleq \begin{bmatrix} \theta_{1,0} & \cdots & \theta_{m,0} \end{bmatrix}, \label{eqn: theta_j,0 defn} \\
    \hat{\theta} &\triangleq  \begin{bmatrix} \hat{\theta}_1 & \cdots & \hat{\theta}_m \end{bmatrix}.
    \label{eqn: thetahat_j defn}
\end{align}
\red{The following Lemma shows that minimizing the cost function $J_k$, given by \eqref{eqn: Jk independent column weighting}, can be done by separately minimizing $J_{j,k}$ for all $j = 1,\hdots,m$.}
\red{
\begin{lem}
\label{lem: Jk min partition}
For all $k \ge 0$, let $\phi_k \in \BBR^{p \times n}$, let $y_k \in \BBR^{p \times m}$, and let ${\Gamma}_{1,k},\hdots,{\Gamma}_{m,k} \in \BBR^{p \times p}$ be positive definite. 
    Furthermore, let $\bar{\theta}_0 \in \BBR^{n \times m}$  and let ${R}_{1},\hdots,R_{m} \in \BBR^{n \times n}$ be positive definite. 
    Then, for all $k \ge 0$, $J_k \colon \BBR^n \rightarrow \BBR$, defined in \eqref{eqn: Jk independent column weighting}, has a unique minimizer given by
\begin{align*}
        \argmin_{\hat{\theta} \in \BBR^{n \times m}} J_k(\hat{\theta}) 
        & = \begin{bmatrix}
           \argmin\limits_{\hat{\theta}_1 \in \BBR^{n}}J_{1,k}(\hat{\theta}_1)  & \hspace{-1.5pt}
           \hdots & 
           \hspace{-1.5pt} \argmin\limits_{\hat{\theta}_m \in \BBR^{n}}J_{m,k}(\hat{\theta}_m) 
        \end{bmatrix}
\end{align*}
\end{lem}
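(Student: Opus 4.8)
The plan is to exploit the separable structure of the cost: by \eqref{eqn: Jk independent column weighting}, $J_k(\hat{\theta}) = \sum_{j=1}^m J_{j,k}(\hat{\theta}_j)$, where the $j$-th summand depends only on the $j$-th column $\hat{\theta}_j$ of $\hat{\theta}$. Since the columns $\hat{\theta}_1,\hdots,\hat{\theta}_m$ range independently over $\BBR^n$ and together parametrize all of $\BBR^{n \times m}$ via \eqref{eqn: thetahat_j defn}, minimizing the sum over $\hat{\theta}$ is equivalent to minimizing each summand separately over its own column. The identity in the statement is then exactly the assertion that the columnwise minimizers, assembled side by side, form the minimizer of $J_k$.

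First I would establish that each $J_{j,k}$ has a \emph{unique} minimizer. For each fixed $j \in \{1,\hdots,m\}$, the function $J_{j,k}$ is a standard regularized least squares cost in the vector variable $\hat{\theta}_j \in \BBR^n$ with positive-definite regularization $R_j$ and positive-definite weights $\Gamma_{j,0},\hdots,\Gamma_{j,k}$. It is therefore a quadratic whose Hessian is $R_j + \sum_{i=0}^k \phi_i^\rmT \Gamma_{j,i} \phi_i$, which is positive definite because $R_j$ is positive definite and each term $\phi_i^\rmT \Gamma_{j,i} \phi_i$ is positive semidefinite. Hence $J_{j,k}$ is strictly convex and admits a unique global minimizer, which I denote $\hat{\theta}_j^\star \triangleq \argmin_{\hat{\theta}_j \in \BBR^n} J_{j,k}(\hat{\theta}_j)$.

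Next I would argue the separability explicitly. For any $\hat{\theta} = \begin{bmatrix} \hat{\theta}_1 & \cdots & \hat{\theta}_m \end{bmatrix}$, the decomposition gives $J_k(\hat{\theta}) = \sum_{j=1}^m J_{j,k}(\hat{\theta}_j) \ge \sum_{j=1}^m J_{j,k}(\hat{\theta}_j^\star)$, where each term satisfies $J_{j,k}(\hat{\theta}_j) \ge J_{j,k}(\hat{\theta}_j^\star)$ by definition of $\hat{\theta}_j^\star$. By the uniqueness established above, the $j$-th inequality is an equality if and only if $\hat{\theta}_j = \hat{\theta}_j^\star$; consequently the overall inequality is an equality if and only if $\hat{\theta}_j = \hat{\theta}_j^\star$ for every $j$. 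Thus $\begin{bmatrix} \hat{\theta}_1^\star & \cdots & \hat{\theta}_m^\star \end{bmatrix}$ is the unique minimizer of $J_k$, which is precisely the claimed identity.

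The argument is essentially routine, so I do not expect a substantive obstacle. The one point that genuinely requires care is the verification of uniqueness of each column minimizer, i.e. the positive-definiteness of the per-column Hessian: this is what upgrades the conclusion from ``\emph{a} minimizer is attained columnwise'' to ``\emph{the unique} minimizer is attained columnwise,'' and it is the hinge on which the ``if and only if'' in the separability step turns.
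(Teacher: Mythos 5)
Your proof is correct and takes essentially the same route as the paper: decompose $J_k$ as the sum of the per-column costs $J_{j,k}$, obtain a unique minimizer for each column, and assemble the columns via \eqref{eqn: thetahat_j defn}. The only difference is one of detail rather than of method --- the paper delegates per-column uniqueness to its standard vector least-squares lemma (Lemma~A.3, citing \cite{islam2019recursive}) and leaves the separability step implicit, whereas you verify uniqueness directly through the positive definiteness of $R_j + \sum_{i=0}^k \phi_i^\rmT \Gamma_{j,i}\phi_i$ and spell out the equality-if-and-only-if argument.
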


\begin{proof}
    Note that, for all $j = 1,\hdots,m$, since $J_{j,k}$ is a function of only $\hat{\theta}_j$, it follows from Lemma \ref{lem: least squares} that $J_{j,k}$ has a unique minimizer. 
    Then, since $J_k(\hat{\theta}) = \sum_{j=0}^m J_{j,k} (\hat{\theta}_j)$, it follows from \eqref{eqn: thetahat_j defn} that the Lemma holds.
\end{proof}
}

\blue{Propositions \ref{prop: batch independent} show that, under assumptions \eqref{eqn: Rbar block diag} and \eqref{eqn: Gammabar block diag}, the cost function $J_k$, given by \eqref{eqn: Jk independent column weighting}, can be minimized by separately updating the columns of the parameter estimate. We call this the \textit{column-by-column} approach.}

\blue{
\begin{prop}
\label{prop: batch independent}
   For all $k \ge 0$, let $\phi_k \in \BBR^{p \times n}$, let $y_k \in \BBR^{p \times m}$, and let ${\Gamma}_{1,k},\hdots,{\Gamma}_{m,k} \in \BBR^{p \times p}$ be positive definite. 
    Furthermore, let $\bar{\theta}_0 \in \BBR^{n \times m}$  and let ${R}_{1},\hdots,R_{m} \in \BBR^{n \times n}$ be positive definite. 
    Then, for all $k \ge 0$, $J_k \colon \BBR^n \rightarrow \BBR$, defined in \eqref{eqn: Jk independent column weighting}, has a unique minimizer, whose columns are denoted as 
    \begin{align}
    \label{eqn: independent minimizer partition}
        \begin{bmatrix}
            \theta_{1,k+1} & \cdots & \theta_{m,k+1}
        \end{bmatrix}
        \triangleq 
        \argmin_{\hat{\theta} \in \BBR^{n \times m}} J_k(\hat{\theta}),
    \end{align}
    which, for all $j = 1,\hdots,m$, are given by
    \begin{align}
    \label{eqn: batch independent}
        \theta_{j,k+1} = A_{j,k}^{-1} b_{j,k},
    \end{align}
    where
    \begin{align}
        A_{j,k} &\triangleq R_j + \sum_{i=0}^k \phi_i^\rmT \Gamma_{j,i} \phi_i, 
        \label{eqn: batch independent A}
        \\
        b_{j,k} &\triangleq R_j \theta_{j,0} + \sum_{i=0}^k \phi_i^\rmT \Gamma_{j,i} y_{j,i},
        \label{eqn: batch independent b}
    \end{align}
    and where, for all $k \ge 0$, $y_{j,k} \in \BBR^p$ is defined in \eqref{eqn: y_j,k defn} and $\theta_{j,0} \in \BBR^n$ is defined \eqref{eqn: theta_j,0 defn}.
    Furthermore, for all $k \ge 0$ and $j = 1,\hdots,m$, $\theta_{j,k+1} \in \BBR^n$ is given recursively by
    \begin{align}
        P_{j,k+1}^{-1} &= P_{j,k}^{-1} + \phi_{k}^\rmT \Gamma_{j,k} \phi_{k}, 
        \label{eqn: recursive independent Pkinv}
        \\
        \theta_{j,k+1} &= \theta_{j,k} + P_{j,k+1} \phi_{k}^\rmT \Gamma_{j,k} (y_{j,k} - \phi_k \theta_{j,k}).
        \label{eqn: recursive independent thetak}
    \end{align}
    where $P_{j,0} \triangleq R_j^{-1}$, and, for all $k \ge 0$ and $j = 1,\hdots,m$, $P_{j,k} \in \BBR^{n \times n}$ is positive definite, and hence nonsingular. 
    Moreover, for all $k \ge 0$ and $j=1,\hdots,m$, $P_{j,k}$ can be expressed recursively as
    \begin{align}
        P_{j,k+1} &= P_{j,k} - P_{j,k} \phi_{k}^\rmT(\Gamma_{j,k}^{-1} + \phi_{k} P_{j,k} \phi_{k}^\rmT)^{-1} \phi_{k} P_{j,k}.
        \label{eqn: recursive independent Pk}
    \end{align}
\end{prop}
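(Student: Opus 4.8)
The plan is to reduce everything to the column-wise decomposition already in hand. By Lemma \ref{lem: Jk min partition}, the unique minimizer of $J_k$ has its columns equal to the individual minimizers of $J_{1,k},\hdots,J_{m,k}$, so it suffices to characterize each $\argmin_{\hat{\theta}_j \in \BBR^n} J_{j,k}(\hat{\theta}_j)$ on its own and then assemble the $m$ results according to \eqref{eqn: independent minimizer partition}. The existence and uniqueness of the overall minimizer is then inherited directly from the column-wise uniqueness.

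The key observation is that each $J_{j,k}$ is itself a standard vector least squares cost over the parameter $\hat{\theta}_j \in \BBR^n$, with regressor $\phi_i$, measurement $y_{j,i}$, positive-definite weighting $\Gamma_{j,i}$, and positive-definite regularization $R_j$ centered at $\theta_{j,0}$. This is precisely the $m=1$ instance of the cost treated in Proposition \ref{prop: batch vec permutation} (equivalently, the standard result of \cite{islam2019recursive}) under the identifications $\bar{R} \mapsto R_j$, $\bar{\Gamma}_k \mapsto \Gamma_{j,k}$, $\bar{y}_k \mapsto y_{j,k}$, and $\bar{\theta}_0 \mapsto \theta_{j,0}$. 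First I would invoke that result to conclude that $J_{j,k}$ has the unique minimizer $\theta_{j,k+1} = A_{j,k}^{-1} b_{j,k}$, with $A_{j,k}$ and $b_{j,k}$ as in \eqref{eqn: batch independent A}--\eqref{eqn: batch independent b}; positive definiteness of $R_j$ guarantees $A_{j,k}$ is invertible, which establishes the batch formula \eqref{eqn: batch independent}.

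For the recursive statements, I would apply the same vector least squares result to each $J_{j,k}$ to obtain the information-form update \eqref{eqn: recursive independent Pkinv}, the gain update \eqref{eqn: recursive independent thetak}, and the covariance update \eqref{eqn: recursive independent Pk}, with the initialization $P_{j,0} = R_j^{-1}$. The positive definiteness (and hence nonsingularity) of $P_{j,k}$ for all $k$ follows by induction: $P_{j,0} = R_j^{-1}$ is positive definite, and \eqref{eqn: recursive independent Pkinv} adds the positive-semidefinite term $\phi_k^\rmT \Gamma_{j,k} \phi_k$ to $P_{j,k}^{-1}$, so $P_{j,k+1}^{-1}$ remains positive definite.

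Because the whole argument is a column-wise reduction to an already-established vector result, there is no genuine analytical obstacle; the only care required is the bookkeeping that confirms each $J_{j,k}$ meets the hypotheses of the vector least squares lemma — namely that $\Gamma_{j,k}$ and $R_j$ are positive definite, which is assumed — and that reassembling the $m$ column minimizers via Lemma \ref{lem: Jk min partition} indeed recovers the global minimizer of $J_k$ rather than merely a stationary point.
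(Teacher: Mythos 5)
Your proposal is correct and takes essentially the same route as the paper's own proof: both decompose $J_k$ into the costs $J_{1,k},\hdots,J_{m,k}$, each depending only on the single column $\hat{\theta}_j$, apply the standard vector least squares result of \cite{islam2019recursive} to each one, and reassemble the $m$ column minimizers to obtain \eqref{eqn: independent minimizer partition}. The only cosmetic difference is that you route the reassembly step through an auxiliary column-wise argmin lemma (which was cut from the final text), whereas the paper's proof states that identical observation inline.
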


\begin{proof}
    Note that, for all $j = 1,\hdots,m$, $J_{j,k}$ is a function of only $\hat{\theta}_j$ and is a standard least squares cost with vector parameters. 
    It then follows from \cite{islam2019recursive} that, for all $j=1,\hdots,m$, $J_{j,k}$ has a unique minimizer, denoted as $\theta_{j,k+1} \triangleq \argmin_{\hat{\theta}_j \in \BBR^{n}}J_{j,N}(\hat{\theta}_j)$, which is given by \eqref{eqn: batch independent}.
    It also follows from \cite{islam2019recursive} that \eqref{eqn: batch independent A} through \eqref{eqn: recursive independent Pk} hold.
    Finally, it follows from \eqref{eqn: Jk independent column weighting} and \eqref{eqn: thetahat_j defn} that
    \begin{align*}
        \argmin_{\hat{\theta} \in \BBR^{n \times m}} J_k(\hat{\theta}) 
        & = \begin{bmatrix}
           \argmin\limits_{\hat{\theta}_1 \in \BBR^{n}}J_{1,k}(\hat{\theta}_1)  & \hspace{-1.5pt}
           \hdots & 
           \hspace{-1.5pt} \argmin\limits_{\hat{\theta}_m \in \BBR^{n}}J_{m,k}(\hat{\theta}_m) 
        \end{bmatrix}
\end{align*}
    and \eqref{eqn: independent minimizer partition} follows.
\end{proof}
}

An advantage of the column-by-column approach versus vec-permutation is that no Kronecker product is used, implying that no sparse matrices are introduced.
\blue{Note that the column-by-column approach can also be derived by applying standard least squares methods to the $m$ columns of \eqref{eqn: yk = phik theta} \cite[p. 102]{ljung1983theory}.
However, our derivation further shows the connection to vec-permutation and how the column-by-column approach implicitly implies independent weighting to the columns of the residual error and parameter regularization.}
\red{The computational complexities of BLS and RLS with independent column weighting are shown in Tables \ref{table: batch} and \ref{table: recursive} respectively.
Independent column weighting results in a $\mathcal{O}(m^2)$ times improvement in computational complexity over vec-permutation for both BLS and RLS as well as a $\mathcal{O}(m)$ times improvement in space complexity.}

\section{Matrix Update Least Squares}

Finally, we make the stronger assumption that there exists positive-definite $R \in \BBR^{n \times n}$ such that $\bar{R}$ is block diagonal of the form
\begin{align}
    \bar{R} = 
    % \begin{bmatrix}
    %     R & \cdots & 0_{n \times n} \\
    %     \vdots & \ddots & \vdots \\
    %     0_{n \times n} & \cdots & R
    % \end{bmatrix} 
    \diag(R,\hdots,R)
    = I_m \otimes R.
    \label{eqn: Rbar block diag identical}
\end{align}
Furthermore, we assume that, for all $k \ge 0$, there exist positive-definite $\Gamma_k \in \BBR^{p \times p}$ such that $\bar{\Gamma}_k$ is block diagonal of the form
\begin{align}
    \bar{\Gamma}_k = 
    % \begin{bmatrix}
    %     \Gamma_{k} & \cdots & 0_{p \times p} \\
    %     \vdots & \ddots & \vdots \\
    %     0_{p \times p} & \cdots & \Gamma_{k}
    % \end{bmatrix} 
    \diag(\Gamma_k,\hdots,\Gamma_k)
    = I_m \otimes \Gamma_k.
    \label{eqn: Gammabar block diag identical}
\end{align}
This corresponds to independent and identical weighting of the columns of the residual error, $y_i - \phi_i \hat{\theta}$, $i = 0,\hdots,k$, and of the regularization term, $\hat{\theta} - \theta_0$, in \eqref{eqn: RLS cost, no assumptions}. 
Then, it follows from Lemma \ref{lem: vec and trace} that, for all $k \ge 0$, \eqref{eqn: RLS cost, no assumptions} can be rewritten as
\begin{align}
\label{eqn: Jk independent identical}
    J_k(\hat{\theta}) = \tr \bigg[ & \sum_{i=0}^k (y_i - \phi_i \hat{\theta})^\rmT \Gamma_i (y_i - \phi_i \hat{\theta}) \nonumber
    \\
    & + (\theta - \theta_0)^\rmT R (\theta - \theta_0) \bigg].
\end{align}
\red{Propositions \ref{prop: batch independent identical} gives the independent identical column weighting batch and recursive least squares methods to minimize cost function $J_k$ given by \eqref{eqn: Jk independent identical}.}
\blue{Propositions \ref{prop: batch independent identical} show that, under assumptions \eqref{eqn: Rbar block diag identical} and \eqref{eqn: Gammabar block diag identical}, the cost function $J_k$, given by \eqref{eqn: Jk independent identical}, can be minimized as a single matrix equation. We call this the \textit{matrix update} approach.}

\blue{
\begin{prop}
\label{prop: batch independent identical}
    For all $k \ge 0$, let $\phi_k \in \BBR^{p \times n}$, let $y_k \in \BBR^{p \times m}$, and let ${\Gamma}_k \in \BBR^{p \times p}$ be positive definite. 
    Furthermore, let ${\theta}_0 \in \BBR^{n \times m}$  and let ${R} \in \BBR^{n \times n}$ be positive definite. 
    Then, for all $k \ge 0$, $J_k \colon \BBR^n \rightarrow \BBR$, defined in \eqref{eqn: Jk independent identical}, has a unique minimizer, denoted as $\theta_{k+1} \triangleq \argmin_{\hat{\theta} \in \BBR^{n \times m}} J_k(\hat{\theta})$, which is given by
    \begin{align}
    \label{eqn: Batch independent identical}
        \theta_{k+1}
        = A_{k}^{-1} b_{k},
    \end{align}
    where
    \begin{align}
        A_{k} &\triangleq R + \sum_{i=0}^k \phi_i^\rmT \Gamma_i\phi_i, 
        \label{eqn: Batch independent identical A}
        \\
        b_{k} &\triangleq R \theta_0 + \sum_{i=0}^k \phi_i^\rmT \Gamma_i y_i.
        \label{eqn: Batch independent identical b}
    \end{align}
    Furthermore, for all $k \ge 0$, $\theta_{k+1} \in \BBR^{n \times m}$ is given recursively by
    \begin{align}
        P_{k+1}^{-1} &= P_k^{-1} + \phi_k^\rmT \Gamma_k \phi_k, 
        \label{eqn: Recursive independent identical Pkinv}
        \\
        \theta_{k+1} &= \theta_k + P_{k+1} \phi_k^\rmT \Gamma_k (y_k - \phi_k \theta_k).
        \label{eqn: Recursive independent identical thetak}
    \end{align}
    where $P_{0} \triangleq R^{-1}$ and, for all $k \ge 0$, $P_k \in \BBR^{n \times n}$ is positive definite, hence nonsingular.
    Moreover, for all $k \ge 0$, $P_{k+1}$ can be expressed recursively as
    \begin{align}
        P_{k+1} &= P_k - P_k \phi_k^\rmT(\Gamma_k^{-1} + \phi_k P_k \phi_k^\rmT)^{-1} \phi_k P_k.
        \label{eqn: Recursive independent identical Pk}
    \end{align}
\end{prop}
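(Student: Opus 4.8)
The plan is to obtain this result as the specialization of the column-by-column Proposition~\ref{prop: batch independent} to the case of identical weighting across columns, namely $R_1 = \cdots = R_m = R$ and, for each $k$, $\Gamma_{1,k} = \cdots = \Gamma_{m,k} = \Gamma_k$, and then to reassemble the $m$ column updates into a single matrix update. Since assumptions \eqref{eqn: Rbar block diag identical} and \eqref{eqn: Gammabar block diag identical} are exactly \eqref{eqn: Rbar block diag} and \eqref{eqn: Gammabar block diag} with all diagonal blocks taken equal, Proposition~\ref{prop: batch independent} applies verbatim and already guarantees that $J_k$ has a unique minimizer whose $j$th column is $\theta_{j,k+1} = A_{j,k}^{-1} b_{j,k}$.

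First I would observe that, under the identical-weighting hypothesis, the per-column normal-equation matrix $A_{j,k} = R_j + \sum_{i=0}^k \phi_i^\rmT \Gamma_{j,i}\phi_i$ in \eqref{eqn: batch independent A} no longer depends on $j$: it equals $A_k = R + \sum_{i=0}^k \phi_i^\rmT\Gamma_i\phi_i$ as defined in \eqref{eqn: Batch independent identical A}. Hence every column minimizer reads $\theta_{j,k+1} = A_k^{-1}b_{j,k}$ with the common inverse $A_k^{-1}$. Stacking these columns and using the partitions \eqref{eqn: y_j,k defn}, \eqref{eqn: theta_j,0 defn}, and \eqref{eqn: thetahat_j defn}, the matrix $[\,b_{1,k}\ \cdots\ b_{m,k}\,]$ collapses column by column to $R\theta_0 + \sum_{i=0}^k \phi_i^\rmT\Gamma_i y_i = b_k$, so that $\theta_{k+1} = [\,\theta_{1,k+1}\ \cdots\ \theta_{m,k+1}\,] = A_k^{-1}b_k$, which is \eqref{eqn: Batch independent identical}.

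Next I would establish the recursions by the same collapsing argument. Because $P_{j,0} = R_j^{-1} = R^{-1}$ is common to all $j$ and the update \eqref{eqn: recursive independent Pkinv} adds the $j$-independent term $\phi_k^\rmT\Gamma_k\phi_k$, an induction on $k$ shows $P_{1,k} = \cdots = P_{m,k}$, so this common matrix may be named $P_k$; this immediately yields \eqref{eqn: Recursive independent identical Pkinv}, \eqref{eqn: Recursive independent identical Pk}, the initialization $P_0 = R^{-1}$, and the positive definiteness of $P_k$ directly from Proposition~\ref{prop: batch independent}. Substituting the common $P_{k+1}$ and common $\Gamma_k$ into the column update \eqref{eqn: recursive independent thetak} and stacking over $j$ then gives $\theta_{k+1} = \theta_k + P_{k+1}\phi_k^\rmT\Gamma_k(y_k - \phi_k\theta_k)$, which is \eqref{eqn: Recursive independent identical thetak}, once the stacked residual $[\,y_{1,k}-\phi_k\theta_{1,k}\ \cdots\ y_{m,k}-\phi_k\theta_{m,k}\,]$ is recognized as $y_k - \phi_k\theta_k$.

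The reasoning is almost entirely bookkeeping, so I do not expect a genuine obstacle; the one point requiring care is verifying that the column partition commutes with left-multiplication by the shared matrices $A_k^{-1}$, $R$, $P_{k+1}$, and $\phi_i^\rmT\Gamma_i$, that is, that $[\,Mv_1\ \cdots\ Mv_m\,] = M[\,v_1\ \cdots\ v_m\,]$, so that the $m$ separate column identities genuinely reassemble into the asserted matrix identities. An alternative self-contained route, should one prefer to avoid invoking Proposition~\ref{prop: batch independent}, is to minimize the trace cost \eqref{eqn: Jk independent identical} directly: setting the matrix gradient $-\sum_{i=0}^k \phi_i^\rmT\Gamma_i(y_i - \phi_i\hat{\theta}) + R(\hat{\theta}-\theta_0)$ to zero yields the matrix normal equation $A_k\hat{\theta} = b_k$, strict convexity from positive definiteness of $A_k$ furnishes uniqueness, and the Sherman--Morrison--Woodbury identity supplies \eqref{eqn: Recursive independent identical Pk}.
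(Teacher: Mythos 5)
Your proposal is correct and follows essentially the same route as the paper's proof: rewrite the identically-weighted cost \eqref{eqn: Jk independent identical} as the column-by-column cost \eqref{eqn: Jk independent column weighting} with $R_j = R$ and $\Gamma_{j,k} = \Gamma_k$, invoke Proposition~\ref{prop: batch independent}, observe that $A_{j,k}$ and $P_{j,k}$ are independent of $j$, and stack the column identities into the matrix equations \eqref{eqn: Batch independent identical}--\eqref{eqn: Recursive independent identical Pk}. Your explicit induction showing $P_{1,k} = \cdots = P_{m,k}$ is a slightly more careful articulation of a step the paper leaves implicit, but the argument is the same.
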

}
\blue{
\begin{proof}
    Note that \eqref{eqn: Jk independent identical} can be rewritten as \eqref{eqn: Jk independent column weighting} where, for all $k \ge 0$ and $j = 1,\hdots,m$, $\Gamma_{j,k} = \Gamma_k$ and $R_j = R$. It then follows from Proposition \ref{prop: batch independent} that $J_k$ has a unique minimizer given by $\theta_{k+1} 
        = \begin{bmatrix}
            A_k^{-1} b_{1,k} & \cdots & A_k^{-1} b_{m,k}
        \end{bmatrix}
        = A_k^{-1}  \begin{bmatrix}
            b_{1,k} & \cdots & b_{m,k}
        \end{bmatrix}$,
    where $A_k \in \BBR^{n \times n}$ is defined in \eqref{eqn: Batch independent identical A} and $b_{1,k},\hdots,b_{m,k}$ are defined in \eqref{eqn: batch independent b}.
    Finally, note that $b_{k} = [b_{1,k} \ \cdots \ b_{m,k}]$, yielding \eqref{eqn: Batch independent identical}.
    Moreover, it also follows from Proposition \ref{prop: batch independent} that, for all $k \ge 0$, $P_{k+1}^{-1} = P_{k}^{-1} + \phi_{k}^\rmT \Gamma_{k} \phi_{k},$ and $\theta_{j,k+1} = \theta_{j,k} + P_{k+1} \phi_{k}^\rmT \Gamma_{k} (y_{j,k} - \phi_k \theta_{j,k})$,
    %
    % \begin{align*}
    %     P_{k+1}^{-1} &= P_{k}^{-1} + \phi_{k}^\rmT \Gamma_{k} \phi_{k}, 
    %     \\
    %     \theta_{j,k+1} &= \theta_{j,k} + P_{k+1} \phi_{k}^\rmT \Gamma_{k} (y_{j,k} - \phi_k \theta_{j,k}),
    % \end{align*}
    %
    where, for all $j = 1,\hdots,m$, $\theta_{j,k} \in \BBR^n$ and $y_{j,k} \in \BBR^p$ are the $j^{\rm th}$ columns of $\theta_k$ and $y_k$, respectively. Combining columns yields \eqref{eqn: Recursive independent identical thetak} and applying matrix inversion lemma to \eqref{eqn: Recursive independent identical Pkinv} yields \eqref{eqn: Recursive independent identical Pk}.
\end{proof}
}
\blue{Similarly, while the matrix update approach can be derived by applying standard least squares methods to the $m$ columns of \eqref{eqn: yk = phik theta} and combining columns \cite[p. 103]{ljung1983theory},
our derivation shows how the matrix update approach implicitly implies independent and identical weighting to the columns of the residual error and parameter regularization.}
%
% Note that in independent identical column weighting, the same covariance matrix is used for each column of $\theta_k$ versus independent column weighting which computes a separate covariance matrix for each column of parameters $\theta_{j,k}$, $j=1,\hdots,m$.
%
\subsection{Convergence of Matrix Update RLS}

% \begin{defn}\label{def:reg_WP} $(\phi_k)_{k=0}^\infty\subset \BBR^{p\times n}$ with weight $(\Gamma_k)_{k=0}^\infty$ is {\it persistently exciting} (PE) if there exist $\alpha > 0$, $\beta > 0$, and $N \in \BBN_0$ such that, for all $j \in\BBN_0$,
% %
% \begin{align}
%     \alpha I_n \le \sum_{i=0}^{N} \phi_{j+i}^\rmT \Gamma_{j+i} \phi_{j+i} \le \beta I_n.\label{eq:persistsum}
% \end{align}
% %
% \end{defn}

It is well-known that in standard RLS, the parameter estimate vector converges to the vector of true parameters if the sequence of regressors $(\phi_k)_{k=0}^\infty$ is persistently exciting \cite{bruce2021necessary,lai2021regularization}.
Theorem \ref{theo::PE_implies_1/k} extends this result to matrix RLS.
\blue{To begin, we extend the definition of persistent excitation (PE) from page 64 of \cite{aastrom1995adaptive} to the case of matrix regressors and nonuniform weight.}\footnote{\blue{Note that \cite{aastrom1995adaptive} considers a measurement process $y_k = \phi_k^\rmT \theta$ while we consider $y_k = \phi_k \theta$. As such, PE is defined in \cite{aastrom1995adaptive} using the limit $\lim_{k \rightarrow \infty} \frac{1}{k} \sum_{i=0}^{k-1} \phi_i \phi_i^\rmT$ whereas we consider $\lim_{k \rightarrow \infty} \frac{1}{k} \sum_{i=0}^{k-1} \phi_i^\rmT \phi_i$.}}

\begin{defn}
$(\phi_k)_{k=0}^\infty\subset \BBR^{p\times n}$ with weight $(\Gamma_k)_{k=0}^\infty$ is persistently exciting (PE) if 
\begin{align}
    C \triangleq \lim_{k \rightarrow \infty} \frac{1}{k} \sum_{i=0}^{k-1} \phi_i^\rmT \Gamma_i \phi_i  \in \BBR^{n \times n}
    \label{eq::PE_matrix_C}
\end{align}
exists and is positive definite. 
\end{defn}
% \begin{proof}
% See page 64 of \cite{aastrom1995adaptive}.
% \end{proof}

\begin{theo}
\label{theo::PE_implies_1/k}
Let $\theta, {\theta}_0 \in \BBR^{n \times m}$  and let ${R} \in \BBR^{n \times n}$ be positive definite.
For all $k \ge 0$, let $\phi_k \in \BBR^{p \times n}$, let $y_k \in \BBR^{p \times m}$ be given by \eqref{eqn: yk = phik theta}, let ${\Gamma}_k \in \BBR^{p \times p}$ be positive definite, and let $P_k \in \BBR^{n \times n}$ and $\theta_k \in \BBR^{n \times m}$ be given by \eqref{eqn: Recursive independent identical Pkinv} and \eqref{eqn: Recursive independent identical thetak}, respectively. 
Assume that $(\phi_k)_{k=0}^\infty$ with weight $(\Gamma_k)_{k=0}^\infty$ is PE, and define $C \in \BBR^{n \times n}$ by \eqref{eq::PE_matrix_C}.  Then,
\begin{align}
    \lim_{k \rightarrow \infty}k(\theta_k - \theta) = C^{-1} R (\theta_0 - \theta).
\end{align}
\end{theo}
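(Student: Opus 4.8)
The plan is to exploit the batch representation from Proposition \ref{prop: batch independent identical} rather than unrolling the recursion directly, since the noise-free data model $y_k = \phi_k\theta$ collapses the estimate into a clean closed form. First I would observe that the inverse-covariance recursion \eqref{eqn: Recursive independent identical Pkinv} with $P_0^{-1} = R$ telescopes to $P_k^{-1} = R + \sum_{i=0}^{k-1}\phi_i^\rmT \Gamma_i \phi_i = A_{k-1}$, where $A_{k-1}$ is the batch matrix of \eqref{eqn: Batch independent identical A}. Together with the batch formula \eqref{eqn: Batch independent identical} (reindexed), this gives $\theta_k = A_{k-1}^{-1} b_{k-1}$, where $b_{k-1} = R\theta_0 + \sum_{i=0}^{k-1}\phi_i^\rmT\Gamma_i y_i$.

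Next I would substitute $y_i = \phi_i\theta$ into $b_{k-1}$. Since $\sum_{i=0}^{k-1}\phi_i^\rmT\Gamma_i\phi_i = A_{k-1} - R$, this yields $b_{k-1} = R\theta_0 + (A_{k-1}-R)\theta$, and hence
\begin{align*}
\theta_k - \theta = A_{k-1}^{-1} b_{k-1} - \theta = A_{k-1}^{-1}\big(R\theta_0 + (A_{k-1}-R)\theta\big) - \theta = A_{k-1}^{-1} R(\theta_0 - \theta).
\end{align*}
This exact, non-asymptotic identity is the crux of the argument: it shows that, in the absence of measurement noise, the estimation error is driven entirely by the regularization bias $R(\theta_0-\theta)$, shaped by $A_{k-1}^{-1}$, so that the limit will be governed purely by the growth of $A_{k-1}$.

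The remaining work is analytic. Multiplying by $k$ and factoring gives $k(\theta_k - \theta) = \big(\tfrac{1}{k}A_{k-1}\big)^{-1} R(\theta_0-\theta)$, so it suffices to show $\tfrac{1}{k}A_{k-1} \to C$. Indeed $\tfrac{1}{k}A_{k-1} = \tfrac{1}{k}R + \tfrac{1}{k}\sum_{i=0}^{k-1}\phi_i^\rmT\Gamma_i\phi_i$; the first term vanishes as $k \to \infty$ and the second converges to $C$ by the PE assumption \eqref{eq::PE_matrix_C}. Since $C$ is positive definite and matrix inversion is continuous on the open set of invertible matrices, $\big(\tfrac{1}{k}A_{k-1}\big)^{-1}\to C^{-1}$, and the stated limit follows by continuity of the linear map $X \mapsto X R(\theta_0-\theta)$.

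I expect the only delicate point to be the justification that $\big(\tfrac{1}{k}A_{k-1}\big)^{-1}\to C^{-1}$: one must note that convergence of $\tfrac{1}{k}A_{k-1}$ to the positive-definite limit $C$ forces these matrices to be eventually invertible, so the inverse is well defined for all sufficiently large $k$, and only then invoke continuity of inversion. Everything preceding that step is purely algebraic, and the noise-free hypothesis $y_k = \phi_k\theta$ is what makes the error identity exact rather than merely asymptotic.
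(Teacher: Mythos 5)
Your proposal is correct and follows essentially the same route as the paper's proof: both use the batch form $\theta_k = A_{k-1}^{-1}b_{k-1}$, substitute the noise-free model $y_i = \phi_i\theta$ to obtain the exact error identity $\theta_k - \theta = A_{k-1}^{-1}R(\theta_0-\theta)$, and then pass to the limit using the PE assumption that $\tfrac{1}{k}A_{k-1} \to C$. The only difference is that you make explicit the continuity-of-inversion argument (and eventual invertibility near the positive-definite limit $C$) that the paper leaves implicit; in fact, since each $A_{k-1}$ is the sum of the positive-definite $R$ and positive-semidefinite terms, invertibility holds for every $k$, not just eventually.
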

\noindent {\it Proof.}
Note that 
\begin{align*}
    \theta_k &= (R + \sum_{i=0}^{k-1} \phi_i^\rmT \Gamma_i\phi_i)^{-1} (R \theta_0 + \sum_{i=0}^{k-1} \phi_i^\rmT \Gamma_i y_i) 
    % \\
    % &= (R + \sum_{i=0}^{k-1} \phi_i^\rmT \Gamma_i\phi_i)^{-1} (R \theta_0 - R\theta + R\theta  + \sum_{i=0}^{k-1} \phi_i^\rmT \Gamma_i \phi_i \theta) 
    \\
    &= (R + \sum_{i=0}^{k-1} \phi_i^\rmT \Gamma_i\phi_i)^{-1} \Big[R (\theta_0 - \theta) + (R + \sum_{i=0}^{k-1} \phi_i^\rmT \Gamma_i \phi_i) \theta \Big]
    \\
    &= (R + \sum_{i=0}^{k-1} \phi_i^\rmT \Gamma_i\phi_i)^{-1} R (\theta_0 - \theta) + \theta.
\end{align*}
Hence, it follows that 
\begin{align}
    \lim_{k \rightarrow \infty} k(\theta_k - \theta) &= \lim_{k \rightarrow \infty} (\frac{1}{k} R + \frac{1}{k} \sum_{i=0}^{k-1} \phi_i^\rmT \Gamma_i\phi_i)^{-1} R (\theta_0 - \theta) \nonumber\\
    & = C^{-1}R (\theta_0 - \theta).\tag*{\mbox{$\square$}}
\end{align}

\begin{table*}\centering
\caption{Batch least squares summary and computational complexities for $N$ measurements}
\setlength\tabcolsep{6pt} 
\begin{tabular}{@{}rllll@{}}
\toprule[1pt] 
 & Algorithm & 
Comp. Complexity (No Assumptions) & 
\begin{tabular}{@{}c@{}} Comp. Complexity \\  $(N \gg n \ge p)$ \end{tabular} &
\begin{tabular}{@{}c@{}} Comp. Complexity \\  $(N \gg p \ge n)$ \end{tabular} 
\\
\midrule
Vec-Permutation & 
\eqref{eqn: batch vec permutation}, \eqref{eqn: batch vec permutation A}, \eqref{eqn: batch vec permutation b} & 
$\mathcal{O}(\max\{N p n m^3 \max\{p,n\} ,n^3 m^3\})$ & 
$\mathcal{O}(N p n^2 m^3)$ &
$\mathcal{O}(N p^2 n m^3)$ 
\\
\blue{Column-by-Column} & 
\eqref{eqn: batch independent}, \eqref{eqn: batch independent A}, \eqref{eqn: batch independent b} & 
$\mathcal{O}(\max\{N p n m \max\{p,n\} ,n^3 m\})$ &
$\mathcal{O}(N p n^2 m)$ &
$\mathcal{O}(N p^2 n m)$ 
\\
\blue{Matrix Update} &
\eqref{eqn: Batch independent identical}, \eqref{eqn: Batch independent identical A}, \eqref{eqn: Batch independent identical b} & 
% $\mathcal{O}(\max\{N p \max\{np,mp,nm,n^2\} ,n^3\})$ &
$\mathcal{O}(\max\{N p \max\{n,m\} \max\{p,n\} ,n^3\})$ &
$\mathcal{O}(N p n \max\{n,m\}) $ &
$\mathcal{O}(N p^2 \max\{n,m\}) $ 
\\
\bottomrule[1pt]
\end{tabular}
\label{table: batch}
\end{table*}

\begin{table*}\centering
\caption{Recursive least squares summary and computational/space complexities per step}
\setlength\tabcolsep{6pt} 
\begin{tabular}{@{}rlllll@{}}
\toprule[1pt] 
& \begin{tabular}{@{}c@{}}  Algorithm \\ $(n \gg p)$ \end{tabular} & 
\begin{tabular}{@{}c@{}}  Comp. Complexity \\ $(n \gg p)$ \end{tabular} &
\begin{tabular}{@{}c@{}}  Algorithm \\ $(p \ge n)$ \end{tabular} & 
\begin{tabular}{@{}c@{}}  Comp. Complexity \\ $(p \ge n)$ \end{tabular} &
\begin{tabular}{@{}c@{}}  Number of Parameters \\ in Memory \end{tabular} 

\\
\midrule
Vec-Permutation & 
\eqref{eqn: recursive vec permutation Pk update}, \eqref{eqn: recursive vec permutation thetak update}  & 
$\mathcal{O}(p n^2 m^3)$ & 
\eqref{eqn: recursive vec permutation Pkinv update}, \eqref{eqn: recursive vec permutation thetak update}  &
$\mathcal{O}(p^2 n m^3)$  & 
$n^2m^2 + nm$ 
\\
\blue{Column-by-Column} & 
\eqref{eqn: recursive independent Pk}, \eqref{eqn: recursive independent thetak}  & 
$\mathcal{O}(p n^2 m)$ & 
\eqref{eqn: recursive independent Pkinv}, \eqref{eqn: recursive independent thetak} &
$\mathcal{O}(p^2 n m)$  & 
$n^2 m + nm$
\\
\blue{Matrix Update} &
\eqref{eqn: Recursive independent identical Pk}, \eqref{eqn: Recursive independent identical thetak} & 
$\mathcal{O}(pn \max\{n,m\})$ & 
\eqref{eqn: Recursive independent identical Pkinv}, \eqref{eqn: Recursive independent identical thetak} &
$\mathcal{O}(p^2 \max\{n,m\})$  & 
$n^2 + nm$
\\
\bottomrule[1pt]
\end{tabular}
\vspace{-10pt}
\label{table: recursive}
\end{table*}

\begin{figure}
    \centering
    \includegraphics[width = .48\textwidth]{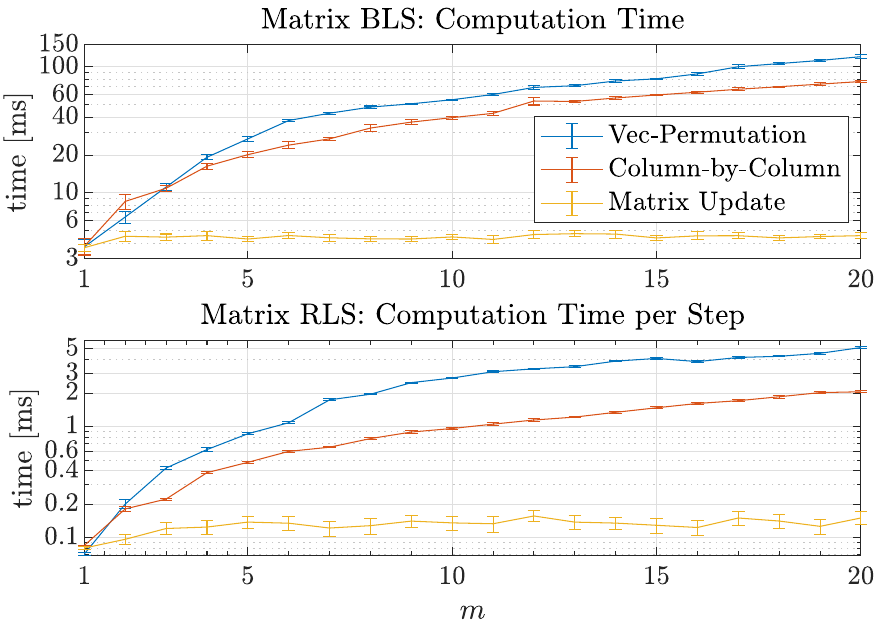}
    \caption{Consider the measurement process \eqref{eqn: yk = phik theta} with $p = 10$, $n = 50$, and $1 \le m \le 20$. Batch least squares (top) shows computation time with $N = 100$ data points, averaged over 10 trials. Recursive least squares (bottom) shows computation time per step, averaged over 100 trials. Error bars show the \qty{95}{\percent} confidence intervals.}
    \label{fig:comp_time}
\end{figure}

\section{Computational Complexity and Performance Tradeoff}
\blue{
Next, we study the computational complexities of vec-permutation, column-by-column, and matrix update least squares. For simplicity, we assume it takes $\SO(nmp)$ arithmetic operations to multiply an $(n \times m)$ matrix with a $(m \times p)$ matrix, $\SO(n^3)$ arithmetic operations to invert an $(n \times n)$ matrix, and $\SO(nm)$ arithmetic operations to add two $(n \times m)$ matrices.
}

The computational complexities of the batch and recursive least squares methods are shown in Tables \ref{table: batch} and \ref{table: recursive} respectively.
\blue{Note that for the RLS methods, columns 2 and 4 of Table \ref{table: recursive} show the most efficient implementation depending on the dimensions $n$ and $p$.}
Note that for both BLS and RLS, column-by-column and matrix update result in an $\mathcal{O}(m^2)$ and $\mathcal{O}(m^3)$ times improvement in computational complexity over vec-permutation, respectively.
Moreover, for RLS, column-by-column and matrix update result in an $\mathcal{O}(m)$ and $\mathcal{O}(m^2)$ times improvement in space complexity over vec-permutation, respectively.
Next, Figure \ref{fig:comp_time} shows numerical testing of BLS and RLS with vec-permutation, column-by-column, and matrix update. 
We consider the measurement process \eqref{eqn: yk = phik theta} with $p = 10$, $n = 50$, and $1 \le m \le 20$.
For larger values of $m$, we see significantly faster computation time for matrix update over vec-permutation and column-by-column.

\blue{While matrix update least squares offers a significant improvement in computational cost for large values of $m$, the following example shows that there may be a sacrifice in performance if there is prior knowledge that measurement noise has highly correlated columns.
While not shown in this example, there may also be performance sacrifice if there is prior knowledge that columns of the parameters are highly correlated.}

\blue{
\begin{example}
\label{example: correlated noise}
Consider the measurement process \eqref{eqn: yk = phik theta} with $p=2$, $m=2$, and $n = 100$. We consider $10$ independent trials where, for each trial, the two columns of the parameters $\theta$ are i.i.d. sampled from the Gaussian distribution $\mathcal{N}(0_{n \times 1},I_{n})$.
For each trial, for all $k \ge 0$, the two rows of the regressor, ${\phi}_k$ are i.i.d. sampled from $\mathcal{N}(0_{n \times 1},I_{n})$, and the measurement $y_k$ is given by $y_k = \phi_k \theta + v_k$, where $v_k \in \BBR^{p \times m}$ is the measurement noise and $\vecc(v_k)$ is i.i.d. sampled from $\mathcal{N}(0_{pm \times 1},\Sigma)$, where
\begin{align}
    \Sigma \triangleq \begin{bmatrix}
        \Sigma_{11} & 9.9 \,  \mathbf{1}_{2 \times 2} 
        \\
        9.9 \,  \mathbf{1}_{2 \times 2}   & \Sigma_{22}
    \end{bmatrix}, \quad
    \Sigma_{11} \triangleq \begin{bmatrix}
        1 & 0.99 \\ 0.99 & 1
    \end{bmatrix}
\end{align}
and $\Sigma_{22} \triangleq 100 \Sigma_{11}$.
In this setup, the second column of the measurement noise has higher variance than that of the first column, and any two elements of the measurement noise are highly correlated with a correlation coefficient of $0.99$.

We compare the performance of vec-permutation, column-by-column, and matrix update RLS. 
We set regularization terms as $\bar{\theta}_0 = 0_{nm \times 1}$ and $\bar{R} = I_{nm}$, $\theta_{j,0} = 0_{n \times 1}$ and $R_{j} = I_n$ $j = 1,2$, and $\theta_{0} = 0_{n \times m}$ and $R = I_n$, respectively.
For vec-permutation, we let $\bar{\Gamma}_k = \Sigma^{-1}$, giving the minimum variance estimator. 
For column-by-column, we let $\Gamma_{1,k} = \Sigma_{11}^{-1}$ and $\Gamma_{2,k} = \Sigma_{22}^{-1}$.
Finally, for matrix update, we consider the three choices $\Gamma_k = I_2$, $\Gamma_k = \Sigma_{11}^{-1}$, and $\Gamma_k = \Sigma_{22}^{-1}$.

Figure \ref{fig:error} shows $\Vert e_k \Vert_2$, the error at step $k$, defined as $\Vert \bar{\theta}_k - \bar{\theta} \Vert_2$, $\Vert [\theta_{1,k} \ \theta_{2,k}] - \theta_k \Vert_2$, and $\Vert \theta_k - \theta \Vert_2$ for vec-permutation, column-by-column, and matrix update RLS, respectively. 
The highlighted region gives the \qty{95}{\percent} confidence interval.
Note how identification performance of matrix update least squares varies considerable based on the choice of $\Gamma_k$. 
Vec-permutation performs similarly to column-by-column when $k \le n = 100$ but gives slightly better performance asymptotic once $k > n$.
\hfill{\large$\diamond$}
\end{example}
}

\begin{figure}[ht]
    \centering
    \includegraphics[width = .48\textwidth]{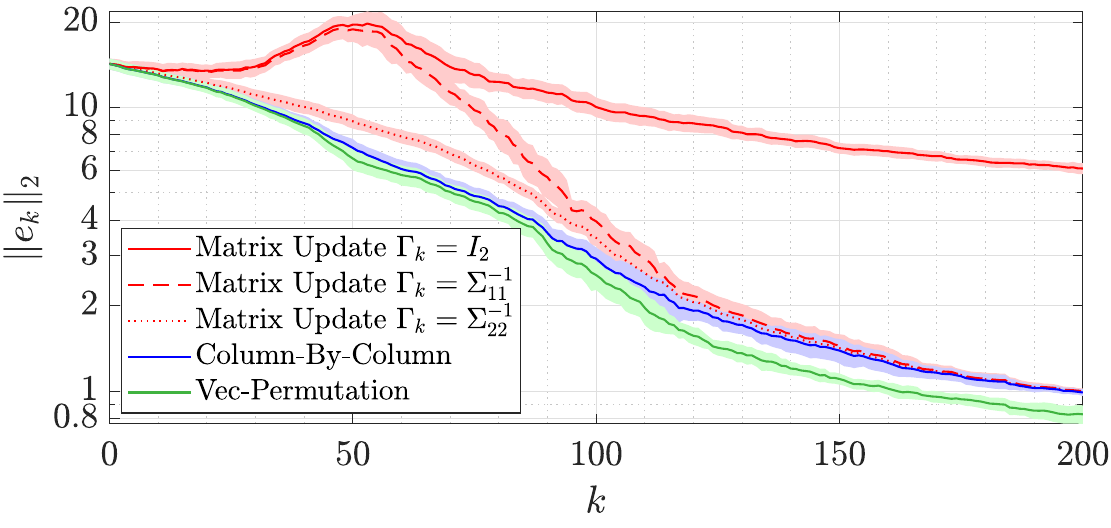}
    \caption{Example \ref{example: correlated noise}: Parameter estimation error $\Vert e_k \Vert_2$ over $200$ steps for matrix update, column-by-column, and vec-permutation recursive least squares with \qty{95}{\percent} confidence intervals over 10 trials highlighted.}
    \label{fig:error}
\end{figure}
\section{Application to Online Identification for Indirect Adaptive Model Predictive Control}
\blue{Finally, a useful application of this work is efficient online identification for adaptive model predictive control.}
Consider a MIMO input-output system of the form
\begin{align}
\label{eqn: ARMA Model}
    y_{k} = -\sum_{i=1}^{\hat{n}} F_i y_{k-i} + \sum_{i=0}^{\hat{n}} G_i u_{k-i},
\end{align}
where $k \ge 0$ is the time step, $\hat{n}$ is the model order, $u_k \in \BBR^m$ is the control, $y_k \in \BBR^p$ is the measurement, and $F_1,\hdots,F_{\hat{n}} \in \BBR^{p \times p}$ and $G_0,\hdots,G_{\hat{n}}  \in \BBR^{p \times m}$ are the system coefficient matrices to be estimated.
A model of the form \eqref{eqn: ARMA Model} is identified online in the indirect adaptive model predictive control scheme: predictive cost adaptive control (PCAC) \cite{nguyen2021predictive}.
For all $k \ge 0$, the system coefficient matrices are estimated by minimizing the cost function $J_k \colon \BBR^{p \times \hat{n}(m+p) + m} \rightarrow \BBR$, defined as
\begin{equation}
\label{eqn: PCAC cost}
    J_k(\hat{\theta}) = \sum_{i=0}^k z_i^\rmT(\hat{\theta}) z_i(\hat{\theta}) + \vecc(\hat{\theta}-\theta_0)^\rmT \bar{P}_0^{-1} \vecc(\hat{\theta}-\theta_0),
\end{equation}
where $z_k \colon \BBR^{p \times \hat{n}(m+p) + m} \rightarrow \BBR^p$ is defined
\begin{align}
    z_k(\hat{\theta}) \triangleq y_k + \sum_{i=1}^{\hat{n}} \hat{F}_i y_{k-i} - \sum_{i=0}^{\hat{n}} \hat{G}_i u_{k-i},
\end{align}
$\hat{\theta} \in \BBR^{p \times \hat{n}(m+p) + m}$ are the coefficients to be estimated, defined
\begin{align}
    \hat{\theta} \triangleq \begin{bmatrix}
        \hat{F}_1 & \cdots & \hat{F}_{\hat{n}} & \hat{G}_0 & \cdots & \hat{G}_{\hat{n}} 
    \end{bmatrix},
\end{align}
and where $\theta_0 \in \BBR^{p \times \hat{n}(m+p) + m}$ is an initial guess of the coefficients and $\bar{P}_0 \in \BBR^{[\hat{n}p(m+p) + mp] \times [\hat{n}p(m+p) + mp]}$ is positive definite. 
Note that, for all $k \ge 0$, $z_k(\hat{\theta})$ can be written as 
\begin{align}
\label{eqn: zk = y - theta phi}
    z_k(\hat{\theta}) = y_k - \hat{\theta} \phi_k,
\end{align}
where $\phi_k \in \BBR^{\hat{n}(m+p) + m}$ is defined as
\begin{align}
    \phi_k \triangleq \begin{bmatrix}
        -y_{k-1}^\rmT & \cdots -y_{k-\hat{n}}^\rmT & u_k^\rmT & \cdots & u_{k-\hat{n}}^\rmT
    \end{bmatrix}^\rmT.
\end{align}
Further defining $\bar{\phi}_k \in \BBR^{p \times \hat{n}p(m+p) + mp}$ as
\begin{align}
    \bar{\phi}_k \triangleq \phi_k^\rmT \otimes I_p,
\end{align}
it follows that $z_k(\hat{\theta})$ can be written as 
\begin{align}
\label{eqn: zk vec permutation}
    z_k(\hat{\theta}) = y_k - \bar{\phi}_k \vecc(\hat{\theta})
\end{align}
where $\vecc(\hat{\theta}) \in \BBR^{\hat{n}p(m+p) + mp}$ is the vectorization of $\hat{\theta}$. 
Using \eqref{eqn: zk vec permutation}, we derive the identification algorithm used in \cite{nguyen2021predictive}.

\begin{prop}
\label{prop: PCAC RLS}
For all $k \ge 0$, let $u_k \in \BBR^m$, $y_k \in \BBR^p$.
Furthermore, let $\theta_0 \in \BBR^{p \times \hat{n}(m+p) + m}$ and let $\bar{P}_0 \in \BBR^{[\hat{n}p(m+p) + mp] \times [\hat{n}p(m+p) + mp]}$ be positive definite. 
Then, for all $k \ge 0$, $J_k$, defined in \eqref{eqn: PCAC cost}, has a unique global minimizer, denoted
\begin{align}
    \theta_{k+1} \triangleq \argmin_{\hat{\theta} \in \BBR^{p \times \hat{n}(m+p) + m}} J_k(\hat{\theta}),
\end{align}
which is given by
\begin{align}
    \bar{P}_{k+1} &= \bar{P}_k - \bar{P}_k \bar{\phi}_k^\rmT(I_p + \bar{\phi}_k \bar{P}_k \bar{\phi}_k^\rmT)^{-1} \bar{\phi}_k \bar{P}_k, \label{eqn: Pbar update ARMA}
    \\
    \vecc(\theta_{k+1}) &= \vecc(\theta_k) + \bar{P}_{k+1} \bar{\phi}_k^\rmT (y_k - \bar{\phi}_k \vecc(\theta_k)). \label{eqn: vec theta update ARMA}
\end{align}

\end{prop}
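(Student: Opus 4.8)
The plan is to recognize the PCAC cost \eqref{eqn: PCAC cost} as an instance of the vec-permutation least squares cost \eqref{eqn: RLS cost, vec permutation} in the vectorized unknown $\vecc(\hat{\theta})$, and then invoke Proposition \ref{prop: batch vec permutation} to read off the recursions. First I would substitute \eqref{eqn: zk vec permutation} into \eqref{eqn: PCAC cost} to write the residual term as $\sum_{i=0}^k (y_i - \bar{\phi}_i \vecc(\hat{\theta}))^\rmT (y_i - \bar{\phi}_i \vecc(\hat{\theta}))$ and the regularization term as $(\vecc(\hat{\theta}) - \vecc(\theta_0))^\rmT \bar{P}_0^{-1} (\vecc(\hat{\theta}) - \vecc(\theta_0))$. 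Comparing term by term with \eqref{eqn: RLS cost, vec permutation}, this is exactly that cost under the identifications $\bar{y}_i \mapsto y_i$, $\bar{\Gamma}_i \mapsto I_p$, $\bar{R} \mapsto \bar{P}_0^{-1}$, and $\bar{\theta}_0 \mapsto \vecc(\theta_0)$, where I note that $y_i \in \BBR^p$ is already a vector so $\vecc(y_i) = y_i$.

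With these identifications the hypotheses of Proposition \ref{prop: batch vec permutation} hold: $\bar{P}_0^{-1}$ is positive definite because $\bar{P}_0$ is, and each weight $I_p$ is trivially positive definite, so $J_k$ has a unique minimizer $\theta_{k+1}$. I would then read off the covariance and parameter recursions \eqref{eqn: recursive vec permutation Pk update} and \eqref{eqn: recursive vec permutation thetak update}. Setting $\bar{\Gamma}_k = I_p$ (so that $\bar{\Gamma}_k^{-1} = I_p$) collapses \eqref{eqn: recursive vec permutation Pk update} to \eqref{eqn: Pbar update ARMA}, and the same substitution together with $\bar{\theta}_k = \vecc(\theta_k)$ and $\bar{y}_k = y_k$ collapses \eqref{eqn: recursive vec permutation thetak update} to \eqref{eqn: vec theta update ARMA}. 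The initialization $\bar{P}_0 = \bar{R}^{-1}$ used in Proposition \ref{prop: batch vec permutation} coincides with the $\bar{P}_0$ already appearing in \eqref{eqn: PCAC cost}, so no reindexing is needed.

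The one point requiring care is that the regressor here is $\bar{\phi}_k = \phi_k^\rmT \otimes I_p$, whereas Proposition \ref{prop: batch vec permutation} was stated with $\bar{\phi}_k = I_m \otimes \phi_k$; this reflects that the ARMA residual $z_k = y_k - \hat{\theta}\phi_k$ is of the transposed form $y = \theta\phi$ flagged in the opening footnote rather than $y = \phi\theta$. I expect this to be the main (though minor) obstacle: one must observe that the recursions in Proposition \ref{prop: batch vec permutation} depend only on $\bar{\phi}_k$ being an arbitrary $p \times (\hat{n}p(m+p)+mp)$ matrix and not on its internal Kronecker structure --- indeed its proof reduces everything to standard vector least squares \cite{islam2019recursive}, which holds verbatim for any regressor. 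Thus the specific factorization $\phi_k^\rmT \otimes I_p$ is immaterial to the argument, and \eqref{eqn: Pbar update ARMA} and \eqref{eqn: vec theta update ARMA} follow.
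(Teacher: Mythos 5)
Your proposal is correct and takes essentially the same route as the paper: the paper's proof consists of invoking Proposition \ref{prop: batch vec permutation} (deferring computational details to equations (8)--(20) of \cite{nguyen2021predictive}), which is exactly what you do by identifying \eqref{eqn: PCAC cost} with the vec-permutation cost under $\bar{y}_i \mapsto y_i$, $\bar{\Gamma}_i \mapsto I_p$, $\bar{R} \mapsto \bar{P}_0^{-1}$, and $\bar{\theta}_0 \mapsto \vecc(\theta_0)$. Your explicit observation that the recursions depend only on $\bar{\phi}_k$ as a generic matrix of the right dimensions --- so that the transposed Kronecker structure $\phi_k^\rmT \otimes I_p$ (rather than $I_m \otimes \phi_k$) is immaterial --- is a point the paper leaves implicit, and it is handled correctly.
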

\begin{proof}
    This result follows from Proposition \ref{prop: batch vec permutation}. For further details, see equations (8) through (20) of \cite{nguyen2021predictive}.
\end{proof}

Next, we provide an alternate formulation using matrix RLS.
\begin{prop}
Consider the notation and assumptions of Proposition \ref{prop: PCAC RLS}. If there exists $P_0 \in \BBR^{[\hat{n}(m+p) + m] \times [\hat{n}(m+p) + m]}$ such that $\bar{P_0} = P_0 \otimes I_p$, then, for all $k \ge 0$, $\theta_{k+1} \in \BBR^{p \times \hat{n}(m+p) + m}$ is given by
\begin{align}
    P_{k+1} &= P_k - \frac{P_k \phi_k \phi_k^\rmT P_k}{1 + \phi_k^\rmT P_k \phi_k},
    \label{eqn: Pk update ARMA}
    \\
    \theta_{k+1} &= \theta_k + (y_k - \theta_k \phi_k ) \phi_k^\rmT P_{k+1}.
    \label{eqn: thetak update ARMA}
\end{align}
\end{prop}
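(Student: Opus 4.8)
The plan is to derive \eqref{eqn: Pk update ARMA}--\eqref{eqn: thetak update ARMA} as the transpose of the matrix update recursion \eqref{eqn: Recursive independent identical Pk}--\eqref{eqn: Recursive independent identical thetak} of Proposition~\ref{prop: batch independent identical}, using the transposition device from the footnote to \eqref{eqn: yk = phik theta}. The complication is that the PCAC model \eqref{eqn: zk = y - theta phi} carries the parameter matrix on the left, $z_k = y_k - \hat\theta\phi_k$, and uses the Kronecker ordering $\bar\phi_k = \phi_k^\rmT \otimes I_p$, whereas Proposition~\ref{prop: batch independent identical} assumes $y = \phi\theta$ with $\bar\phi_k = I_m \otimes \phi_k$. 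First I would transpose the model: put $\tilde\theta \triangleq \hat\theta^\rmT$, $\tilde\phi_k \triangleq \phi_k^\rmT$, and $\tilde y_k \triangleq y_k^\rmT$, so that the residual reads $\tilde y_k - \tilde\phi_k\tilde\theta$, which is exactly the form \eqref{eqn: yk = phik theta} with the role of $p$ played by $1$ and the role of $m$ played by $p$.

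I would then show that the PCAC cost \eqref{eqn: PCAC cost} is an instance of the identical-column-weighting cost \eqref{eqn: Jk independent identical} for $\tilde\theta$. The data-fit term is immediate: since $\tilde y_i - \tilde\phi_i\tilde\theta = z_i^\rmT$, we have $\tr[(\tilde y_i - \tilde\phi_i\tilde\theta)^\rmT(\tilde y_i - \tilde\phi_i\tilde\theta)] = \tr[z_i z_i^\rmT] = z_i^\rmT z_i$, which identifies the scalar weight $\Gamma_i = 1$. The regularization term is where the hypothesis $\bar P_0 = P_0 \otimes I_p$ does the work: writing $\delta \triangleq \hat\theta - \theta_0$ and using the commutation (vec-permutation) matrix $K$ with $\vecc(\delta^\rmT) = K\vecc(\delta)$ together with the identity $K(P_0^{-1}\otimes I_p)K^\rmT = I_p \otimes P_0^{-1}$, the PCAC penalty $\vecc(\delta)^\rmT(P_0^{-1}\otimes I_p)\vecc(\delta)$ becomes $\vecc(\delta^\rmT)^\rmT(I_p \otimes P_0^{-1})\vecc(\delta^\rmT)$. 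This is exactly the block-diagonal regularization \eqref{eqn: Rbar block diag identical} for the transposed parameter with $R = P_0^{-1}$, so that the initialization $P_0 = R^{-1}$ of Proposition~\ref{prop: batch independent identical} coincides with the given $P_0$; positive definiteness of $R$ follows from that of $\bar P_0$.

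With the cost matched, I would invoke Proposition~\ref{prop: batch independent identical} for $\tilde\theta$ to obtain \eqref{eqn: Recursive independent identical Pkinv}--\eqref{eqn: Recursive independent identical Pk} with $\tilde\phi_k = \phi_k^\rmT$ and $\Gamma_k = 1$. Substituting these specializes the covariance recursion to $P_{k+1} = P_k - P_k\phi_k(1 + \phi_k^\rmT P_k \phi_k)^{-1}\phi_k^\rmT P_k$, which is \eqref{eqn: Pk update ARMA}, and the estimate recursion to $\tilde\theta_{k+1} = \tilde\theta_k + P_{k+1}\phi_k(y_k^\rmT - \phi_k^\rmT\tilde\theta_k)$; transposing the latter and using $P_{k+1} = P_{k+1}^\rmT$ recovers \eqref{eqn: thetak update ARMA}. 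The main obstacle is the middle step: reconciling the two opposite Kronecker orderings ($P_0^{-1}\otimes I_p$ versus $I_p \otimes P_0^{-1}$) through the commutation matrix and verifying that the assumption $\bar P_0 = P_0\otimes I_p$ is precisely what makes the regularization separable and identical across the columns of $\tilde\theta$, i.e.\ the hypothesis \eqref{eqn: Rbar block diag identical} required to apply the matrix update proposition. The residual and estimate updates then transpose back routinely.
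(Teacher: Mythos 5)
Your proposal is correct and takes the same route as the paper: the paper's entire proof is the one-line citation ``This result follows from Proposition~\ref{prop: batch independent identical},'' and your argument is precisely the elaboration of that citation --- transposing the model per the footnote to \eqref{eqn: yk = phik theta}, identifying $\Gamma_i = 1$ and $R = P_0^{-1}$ via the commutation-matrix reordering of $\bar{P}_0^{-1} = P_0^{-1}\otimes I_p$, and transposing the resulting recursions back. Your write-up supplies the details (in particular, why the hypothesis $\bar{P}_0 = P_0 \otimes I_p$ is exactly condition \eqref{eqn: Rbar block diag identical} for the transposed parameter) that the paper leaves implicit.
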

%
% \begin{proof}
% Note that, for all $k \ge 0$, $z_k(\hat{\theta})^\rmT z_k(\hat{\theta}) = \tr(z_k(\hat{\theta})^\rmT z_k(\hat{\theta})) = \tr(z_k(\hat{\theta}) z_k(\hat{\theta})^\rmT )$.    
% %
% It then follows from \eqref{eqn: x B x^T identity} of Lemma \ref{lem: vec and trace} that, for all $k \ge 0$, \eqref{eqn: PCAC cost} can be written as
% %
% \begin{align*}
%     J_k(\hat{\theta}) = \tr \Big[\sum_{i=0}^k z_i(\hat{\theta}) z_i(\hat{\theta})^\rmT + (\hat{\theta}-\theta_0) {P}_0^{-1} (\hat{\theta}-\theta_0)^\rmT \Big].
% \end{align*}
% % 
% Proposition \ref{prop: batch independent identical} then implies that, for all $k \ge 0$, \eqref{eqn: Pk update ARMA} and $\theta_{k+1}^\rmT = \theta_k^\rmT + P_{k+1} \phi_k (y_k^\rmT - \phi_k^\rmT \theta_k^\rmT)$
% %
% hold. Taking the transpose then yields \eqref{eqn: thetak update ARMA}.
% %
% \end{proof}

\begin{proof}
    This result follows from Proposition \ref{prop: batch independent identical}.
\end{proof}
\begin{example}
This example is from \cite{mohseni2022predictive} and uses PCAC for the control of a flexible structure under harmonic and broadband disturbances.
Consider the 4-bay truss show in Figure \ref{fig:Truss} made of flexible truss elements with unknown mass and stiffness. 
Two actuators are placed at nodes 3 and 4 with control authority in the $x$-direction and $x$-direction displacement sensors are placed at nodes 5, 6, 7, and 8.
The objective is to use PCAC to suppress the effects of broadband disturbances \blue{with constraints on actuator force and} without prior knowledge of the truss dynamics.
See \cite{mohseni2022predictive} for further details.

This example has inputs $u_k \in \BBR^2$ and outputs $y_k \in \BBR^4$.
Online identification was done in \cite{mohseni2022predictive} using vec-permutation, given by \eqref{eqn: Pbar update ARMA} and \eqref{eqn: vec theta update ARMA}, with identity regularization.
We replicated the results of \cite{mohseni2022predictive} using matrix RLS, given by \eqref{eqn: Pk update ARMA} and \eqref{eqn: thetak update ARMA}.
Table \ref{table: truss} shows that using matrix RLS resulted in a \qty{97.6}{\percent} decrease in computation time needed for system identification per step. 
Moreover, since system identification is a significant part of PCAC, Table \ref{table: truss} also shows a \qty{21.4}{\percent} decrease in total computation time per step. \hfill{\large$\diamond$}

\begin{figure}[ht]
    \centering
    \includegraphics[width = .15 \textwidth]{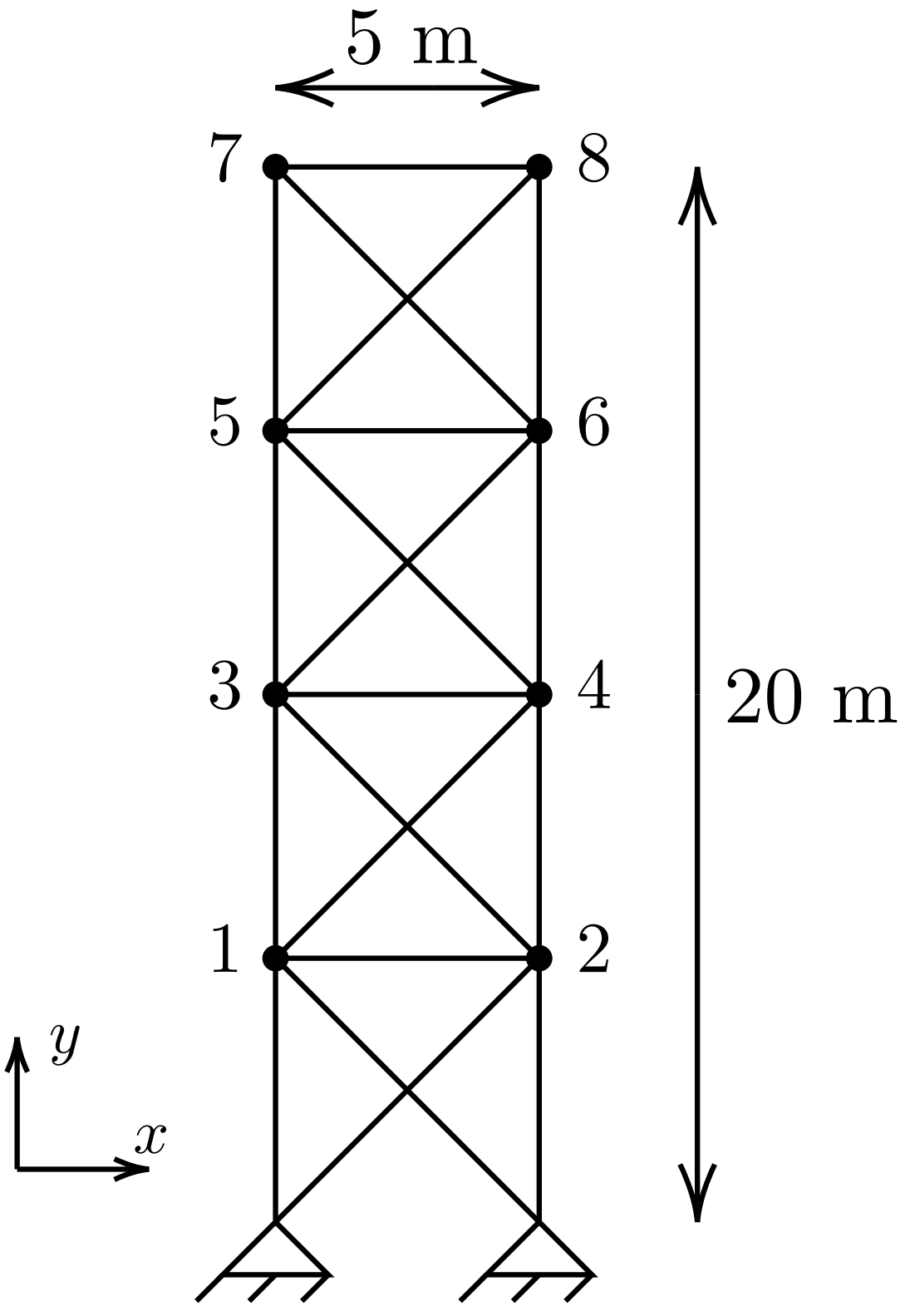}
    \caption{Flexible truss structure from \cite{mohseni2022predictive} with nodes labeled.}
    \label{fig:Truss}
\end{figure}

\begin{table}[ht]\centering
\caption{Truss ex. computation time per step: mean $\pm$ std. deviation }
\setlength\tabcolsep{6pt} 
\begin{tabular}{@{}rlll@{}}
\toprule[1pt] 
 & Vec-Permutation & 
Matrix RLS &
Change
\\
\midrule
ID Time per Step & 
\SI[separate-uncertainty = true]{5.9(5)}{\milli\second} & 
\SI[separate-uncertainty = true]{.14(3)}{\milli\second} &
\qty{-97.6}{\percent}
\\
Total Time per Step & 
\SI[separate-uncertainty = true]{28(6)}{\milli\second} & 
\SI[separate-uncertainty = true]{22(6)}{\milli\second} &
\qty{-21.4}{\percent}
\\
\bottomrule[1pt]
\end{tabular}
\vspace{-10pt}
\label{table: truss}
\end{table}

\end{example} 

\section{Conclusions}
This work derives batch and recursive least squares algorithms for the identification of matrix parameters. 
Under the assumption of independent, identical column weighting, this method minimizes the same cost function as the vec-permutation approach while significantly improving computational complexity.
It is also shown how, under persistent excitation, convergence guarantees can be extended from the vector case to the matrix case.
This approach can be used fast online identification of MIMO systems which is critical in indirect adaptive model predictive control.
\blue{A future area of interest is studying how various RLS forgetting algorithms (e.g. \cite{lai2024generalized}) can be applied to matrix update RLS.}

\bibliographystyle{IEEEtran}
\bibliography{refs}
\appendix
%
% \begin{lema}{A.1}
% \label{lem: matrix inversion lemma}
% Let $A \in \BBR^{n \times n}$, $U \in \BBR^{n \times p}$, $C \in \BBR^{p \times p}$, $V \in \BBR^{p \times n}$. Assume $A$, $C$, and $A+UCV$ are nonsingular. Then, $(A+UCV)^{-1} = A^{-1} - A^{-1}U(C^{-1} + VA^{-1} U)^{-1} V A^{-1}$.
% \end{lema}

% \begin{lema}{A.1}
% \label{lem: vec-permutation}
% Let $A \in \BBR^{k \times l}$ and $B \in \BBR^{l \times m}$. Then, $\vecc(AB) = (I_m \otimes A) \vecc(B) = (B^\rmT \otimes I_k) \vecc(A)$.
% %
% % \begin{align}
% %     \vecc(AB) = (I_m \otimes A) \vecc(B) = (B^\rmT \otimes I_k) \vecc(A).
% % \end{align}
% %
% \end{lema}
%
\begin{lema}{A.1}
\label{lem: vec and trace}
Let $x \in \BBR^{n \times m}$ and let $A \in \BBR^{n \times n}$. Then, $\vecc(x)^\rmT (I_m \otimes A) \vecc(x) = \tr(x^\rmT A x)$.
%
% \begin{align}
%     \vecc(x)^\rmT (I_m \otimes A) \vecc(x) = \tr(x^\rmT A x), \label{eqn: x^T A x identity}
%     \\
%     \vecc(x)^\rmT (B \otimes I_n ) \vecc(x) = \tr(x B x^\rmT). \label{eqn: x B x^T identity}
% \end{align}
%
\end{lema}
%
% \begin{proof}
%     Note that by Lemma \ref{lem: vec-permutation}, $\vecc(x)^\rmT (I_m \otimes A) \vecc(x) = \vecc(x)^\rmT \vecc(A x) = \tr(x^\rmT A x)$, proving \eqref{eqn: x^T A x identity}.
%     %
%     Next, by Lemma \ref{lem: vec-permutation}, $ \vecc(x)^\rmT (B \otimes I_n ) \vecc(x) = \vecc(x)^\rmT \vecc(x B^\rmT) = \tr(x^\rmT x B^\rmT) = \tr(x B^\rmT x^\rmT ) =  \tr(x B x^\rmT )$.
% \end{proof}
%
% \begin{lema}{A.4}
% \label{lem: trace of product}
% Let $A \in \BBR^{n \times n}$ and let $B = [b_1,\hdots,b_m]\in \BBR^{n \times m}$ where $b_1,\hdots,b_m \in \BBR^n$. Then, 
% %
% \begin{align}
%     \tr(B^\rmT A B) = \sum_{i=1}^m b_i^\rmT A b_i.
% \end{align}
% %
% \end{lema}
% %
% \begin{proof}
%     By the cyclic property of the matrix trace, it follows that $\tr(B^\rmT A B) = \tr(A B B^\rmT )$. 
%     %
%     Next, by linearity, $\tr(A B B^\rmT) = \tr(A \sum_{i=1}^m b_i b_i^\rmT) = \sum_{i=1}^m \tr(A b_i b_i^\rmT)$.
%     %
%     Again by the cyclic property, $\sum_{i=1}^m \tr(A b_i b_i^\rmT) = \sum_{i=1}^m \tr(b_i^\rmT A b_i ) = \sum_{i=1}^m b_i^\rmT A b_i$.
% \end{proof}
%
%
\red{
\begin{lema}{A.3}
\label{lem: least squares}
    For all $k \ge 0$, let $\phi_k \in \BBR^{p \times n}$, let $y_k \in \BBR^p$, and let $\Gamma_i \in \BBR^{p \times p}$ be positive definite. Furthermore, let $\theta_0 \in \BBR^n$  and let $P_0 \in \BBR^{n \times n}$ be positive definite. 
    For all $k \ge 0$, define function $J_k \colon \BBR^n \rightarrow \BBR$ as $J_k(\hat{\theta}) = \sum_{i=0}^k (y_i - \phi_k \hat{\theta})^\rmT \Gamma_i (y_i - \phi_i \hat{\theta}) + (\hat{\theta} - \theta_0)^\rmT P_0^{-1} (\hat{\theta} - \theta_0)$.
    %
    % \begin{align}
    %     J_k(\hat{\theta}) = & \sum_{i=0}^k (y_i - \phi_k \hat{\theta})^\rmT \Gamma_i (y_i - \phi_i \hat{\theta}) \nonumber
    %     % \\
    %      + (\hat{\theta} - \theta_0)^\rmT P_0^{-1} (\hat{\theta} - \theta_0).
    % \end{align}
    %
    Then, for all $k \ge 0$, $J_k$ has a unique minimizer, denoted as $\theta_{k+1} \triangleq \argmin_{\hat{\theta} \in \BBR^n} J_k(\hat{\theta})$. Furthermore, for all $k \ge 0$, 
    \begin{align}
        \theta_{k+1} = A_k^{-1} b_k,
    \end{align}
    where $A_k \triangleq P_0^{-1} + \sum_{i=0}^k \phi_i^\rmT \Gamma_i \phi_i$ and $b_k \triangleq P_0^{-1} \theta_0 + \sum_{i=0}^k \phi_i^\rmT \Gamma_i y_i$.
    %
    % \begin{align*}
    %     A_k &\triangleq P_0^{-1} + \sum_{i=0}^k \phi_i^\rmT \Gamma_i \phi_i, \quad
    %     b_k \triangleq P_0^{-1} \theta_0 + \sum_{i=0}^k \phi_i^\rmT \Gamma_i y_i.
    % \end{align*}
    %
    Moreover, for all $k \ge 0$, $\theta_{k+1}$ can be expressed recursively as
    \begin{align}
        P_{k+1}^{-1} &= P_k^{-1} + \phi_k^\rmT \Gamma_k \phi_k, \\
        \theta_{k+1} &= \theta_k + P_{k+1} \phi_k^\rmT \Gamma_k (y_k - \phi_k \theta_k),
    \end{align}
    where, for all $k \ge 0$, $P_k \in \BBR^{n \times n}$ is positive definite, hence nonsingular. 
    Finally, for all $k \ge 0$, $P_{k+1}$ can be expressed recursively as
    \begin{align}
        P_{k+1} &= P_k - P_k \phi_k^\rmT (\Gamma_k^{-1} + \phi_k P_k \phi_k^\rmT)^{-1} \phi_k P_k.
    \end{align}
\end{lema}
\begin{proof}
    See \cite{islam2019recursive}.
\end{proof}
}

\end{document}